\documentclass[11pt]{article}

\usepackage[margin=1in]{geometry}
\usepackage{amsmath,amssymb,amsthm,mathtools}
\usepackage{bbm, comment}
\usepackage{enumitem}
\usepackage{mathrsfs}
\usepackage{hyperref}
\usepackage{soul}
\usepackage[numbers,sort&compress]{natbib}

\hypersetup{
  colorlinks=true,
  linkcolor=blue,
  citecolor=blue,
  urlcolor=blue
}

\newtheorem{theorem}{Theorem}[section]

\newtheorem{lemma}[theorem]{Lemma}
\newtheorem{corollary}[theorem]{Corollary}

\newcommand{\Pbb}{\mathbb{P}}
\newcommand{\Ebb}{\mathbb{E}}

\newcommand{\cov}{{\textnormal{cov}}}

\newcommand{\diag}{{\textnormal{diag}}}

\newcommand{\blue}{\color{blue}}
\newcommand{\red}{\color{red}}
\newcommand{\black}{\color{black}}

\newif\ifva
\vatrue

\begin{document}

\title{True Self-Avoiding Walk for Accelerating 
Markov-Chain \\ Monte Carlo
Integration}

\author{Qinghua (Devon) Ding, \quad Venkat Anantharam\\
    Department of Electrical Engineering and Computer Sciences\\
    University of California at Berkeley\\
    Berkeley, CA, United States\\
    Email: \{devon\_ding, ananth\}@berkeley.edu}

\maketitle

\begin{abstract}
We study true self-avoiding walk (TSAW) as a mechanism for improving empirical integral estimation via Markov chain Monte Carlo
(MCMC).
We consider finite-state adaptive sampling dynamics associated with 
an irreducible Markov kernel $P$ on a finite set,
with stationary distribution $\pi$, in which 
the
transition probabilities are 
penalized
according to empirical overuse.

Our main result is that the empirical occupation counts $L_t(i)$ and transition counts $N_t(i,j)$ of the resulting TSAW-based walk satisfy
\[
L_t(i)-t\pi_i = O(\sqrt{\log t})
\quad\text{and}\quad
N_t(i,j)-t\pi_iP_{ij}=O(\sqrt{\log t})
\qquad\text{almost surely}
\]
for every state $i$ and every edge $(i,j)$ with $P_{ij}>0$. Consequently, for every bounded function $f:V\to\mathbb R$, the error of our integral estimator converges as
\[
\left|\frac1t\sum_{s=0}^{t-1} f(X_s)-\sum_{i\in V}\pi_i f(i)\right|
=
O\left(\frac{\sqrt{\log t}}{t}\right)
\qquad\text{almost surely}.
\]

These results show that, in contrast with the usual $t^{-1/2}$ error scaling for empirical averages under standard random-walk-based methods, TSAW-based estimator yields empirical integral errors of order $O(\sqrt{\log t}/t)$ almost surely, thereby achieving a substantially sharper dependence on the sample size $t$.
\end{abstract}

\section{Introduction}
\label{sec:intro}

Let $V$ be a finite state space, let $P$ be an irreducible Markov kernel on $V$, and let $\pi$ denote its unique stationary distribution. For a bounded function $f:V\to\mathbb{R}$, a basic task in the Markov chain Monte Carlo is to estimate the stationary integral
\[
\pi(f):=\sum_{x\in V}\pi(x)f(x)
\]
from a trajectory $(X_t)_{t\ge 0}$ of the chain, then the empirical average converges as $T\rightarrow \infty$,
\[
\frac{1}{t}\sum_{s=0}^{t-1} f(X_s) \rightarrow \pi(f).
\]
The finite-time quality of this estimator is governed by the empirical occupation statistics of the trajectory. Writing the vertex local time as
\[
L_t(i):=\sum_{s=0}^{t-1}\mathbf{1}\{X_s=i\},
\qquad i\in V,
\]
one has
\[
\frac{1}{t}\sum_{s=0}^{t-1} f(X_s)-\pi(f)
=
\frac{1}{t}\sum_{i\in V} f(i)\bigl(L_t(i)-t\pi_i\bigr).
\]
Thus empirical integral estimation is directly controlled by the discrepancy between the occupation counts $L_t(i)$ and their target values $t\pi_i$.

For ordinary random-walk-based MCMC, the trajectory may spend a significant amount of time revisiting recently explored regions of the state space. This produces local oversampling, slows the correction of occupation imbalance, and ultimately limits the quality of finite-time empirical averages. A substantial literature addresses this phenomenon by modifying the base chain to avoid backtracking or to break reversibility; see, for example, \cite{DiaconisHolmesNeal2000,Neal2004,AlonBenjaminiLubetzkySodin2007,LeeXuEun2012,TuritsynChertkovVucelja2011,ChenHwang2013,MaChenWuFox2016,ThinEtAl2020}. In a different direction, nonlinear and self-interacting Markov chains provide a general framework in which the transition rule depends on the occupation history of the process \cite{DelMoralMiclo2004,DelMoralMiclo2006,AndrieuJasraDoucetDelMoral2007,AndrieuJasraDoucetDelMoral2011,DelMoralDoucet2010,FortMoulinesPriouret2011,Fort2015}.

The present paper studies true self-avoiding walk (TSAW) as a mechanism for reducing this occupation imbalance. Classical TSAW and weakly self-avoiding walk models were introduced and studied mainly for path properties such as scaling behavior, recurrence, and self-repulsion on lattice-like state spaces \cite{AmitParisiPeliti1983,Toth1995,VetoToth2008,Grassberger2017}. More recently, Doshi, Hu, and Eun \cite{DoshiHuEun2023} developed self-repellent random walks on general finite graphs as nonlinear Markov chains targeting a prescribed stationary distribution, proved almost sure convergence of the empirical distribution, and established a central limit theorem whose asymptotic covariance decreases as the repulsion strength increases. Their work provides a natural benchmark for self-avoiding MCMC on finite state spaces.

We design an estimator based on TSAW that improves the convergence rate in terms of the length of the trajectory, compared to the self-repellent walk based estimator. 
Specifically, we show a better bound on the finite-time discrepancy for a TSAW-based dynamics. 
Our main theorem shows that the empirical occupation counts and empirical transition counts remain within $O(\sqrt{\log t})$ of their target values almost surely. More precisely, if
\[
L_t(i):=\sum_{s=0}^{t-1}\mathbf{1}\{X_s=i\},
\qquad
N_t(i,j):=\sum_{s=0}^{t-1}\mathbf{1}\{X_s=i,\ X_{s+1}=j\},
\]
then for every state $i\in V$ and every edge $(i,j)$ with $P_{ij}>0$, we show that our estimators have discrepancy bounded as
\[
L_t(i)-t\pi_i = O(\sqrt{\log t})
\qquad\text{and}\qquad
N_t(i,j)-t\pi_iP_{ij}=O(\sqrt{\log t}),
\]
with high probability. As an immediate consequence, for every observable $f:V\to\mathbb{R}$,
\[
\left|
\frac{1}{t}\sum_{s=0}^{t-1} f(X_s)-\pi(f)
\right|
=
O\left(\frac{\sqrt{\log t}}{t}\right),
\]
with high probability. This gives an explicit finite-time error bound for empirical integral estimation. 
By Corollary~4.6 of \cite{DoshiHuEun2023}, for every observable $f:V\to\mathbb{R}$ and every repulsion parameter $\alpha\ge 0$, the SRRW empirical estimator converges almost surely to $\pi(f)$ and satisfies the central limit theorem
\[
\sqrt{t}\left(\frac1t\sum_{s=0}^{t-1} f(X_s)-\pi(f)\right)
\ \xrightarrow[]{\mathrm{dist}}\ 
\mathcal N\bigl(0,\sigma^2_{f,\alpha}\bigr),
\]
for a variance $\sigma^2_{f,\alpha}$ depending on $f$ and $\alpha$. Thus, in \cite{DoshiHuEun2023}, the empirical integral error remains at the standard $O\big(\frac{1}{\sqrt{t}}\big)$ scale, although the asymptotic variance decreases as the repulsion parameter $\alpha$ increases. By contrast, our main theorem yields the $O\big(\frac{\sqrt{\log t}}{t}\big)$ bound.
In this sense, we improve the convergence rate in terms of $t$ by a factor of $O\big(\sqrt{\frac{t}{\log t}}\big)$.

The argument proceeds in the following steps. We begin with a warm-up model on the star graph, where the TSAW dynamics can be represented through an exponential-race embedding, allowing us to analyze the cover time in detail. We then study a finite-alphabet self-balancing model and construct a Lyapunov function for the occupation-deficit process, from which we obtain uniform exponential moment bounds by a standard drift argument. Finally, we pass to the finite-state Markov-kernel setting by applying the same balancing mechanism row-wise to the transition counts, and combine the resulting row-wise estimates with a Poisson equation for the occupation vector to derive the almost-sure $O(\sqrt{\log t})$ discrepancy bound stated above.

The remainder of the paper is organized as follows. Section~\ref{sec:related-work} discusses related work. Section~\ref{sec:faster-cover} studies the warm-up star-graph model. Section~\ref{sec:self_balancing} develops the self-balancing framework, first in the finite-alphabet setting and then for general finite irreducible Markov kernels.

\section{Related Work}
\label{sec:related-work}

The closest prior work to ours is that of Doshi, Hu, and Eun \cite{DoshiHuEun2023}, who study self-repellent random walks on finite graphs as nonlinear Markov chains targeting a prescribed stationary distribution. In their setting, the empirical distribution converges almost surely to the target law, and the empirical integral estimator satisfies a central limit theorem; see in particular Corollary~4.6 of \cite{DoshiHuEun2023}. Moreover, Corollary~4.7 of \cite{DoshiHuEun2023} shows that increasing the repulsion strength decreases the asymptotic variance. Thus the improvement in \cite{DoshiHuEun2023} is at the level of the asymptotic variance constant, while the empirical integral error remains at the standard $O(1/\sqrt{t})$ scale. 

True self-avoiding walk and weakly self-avoiding walk have a much longer history in probability theory and statistical physics. Classical works such as \cite{AmitParisiPeliti1983,Toth1995,VetoToth2008,Grassberger2017} study self-repelling walks mainly through path properties, including scaling limits, recurrence, and self-repulsion on lattice-type graphs. These works provide the probabilistic background for TSAW, but they are not formulated in terms of empirical integral estimation for a prescribed target distribution on a finite state space. Our use of TSAW is different in purpose: the self-avoidance mechanism is introduced here in order to control occupation imbalance relative to a target law.

There is also a substantial literature showing that non-backtracking or nonreversible modifications can improve the efficiency of random-walk-based Monte Carlo methods; see, for example, \cite{DiaconisHolmesNeal2000,Neal2004,AlonBenjaminiLubetzkySodin2007,LeeXuEun2012,TuritsynChertkovVucelja2011,ChenHwang2013,MaChenWuFox2016,ThinEtAl2020}. These methods typically improve performance by preventing immediate reversals or by introducing irreversible flow into the dynamics. Our setting is different: the transition rule depends on empirical usage accumulated along the trajectory, and the main quantity of interest is the resulting occupation and transition discrepancy. In this sense, the present paper is closer to trajectory-dependent self-avoidance than to nonreversible acceleration.

From a broader perspective, the model studied here belongs naturally to the theory of nonlinear and self-interacting Markov chains. In this direction, Del Moral and Miclo \cite{DelMoralMiclo2004,DelMoralMiclo2006}, Andrieu, Jasra, Doucet, and Del Moral \cite{AndrieuJasraDoucetDelMoral2007,AndrieuJasraDoucetDelMoral2011}, Del Moral and Doucet \cite{DelMoralDoucet2010}, and Fort and coauthors \cite{FortMoulinesPriouret2011,Fort2015} develop general frameworks for Markov dynamics whose transition mechanism depends on occupation statistics or interacting empirical measures. Our contribution fits into this history-dependent framework, but focuses on a concrete TSAW-based construction for which one can prove explicit almost-sure bounds on vertex and edge discrepancies.

Taken together, these works place the present paper at the intersection of self-avoiding random walks, nonlinear Markov chains, and MCMC variance reduction. Relative to the existing literature, the main novelty here is the explicit finite-time pathwise control of empirical occupation and transition errors, and the resulting almost-sure bound of order $O(\sqrt{\log t}/t)$ for empirical integral estimation.

\section{A warm-up: cover time of TSAW on the star graph}
\label{sec:faster-cover}

We begin with a warm-up model that isolates the effect of self-avoidance on exploration. Let $G_n$
be the star graph with hub $0$ and leaves $1,\dots,n$. Under TSAW, every jump from the hub to a leaf
is followed by an immediate return to the hub, so the dynamics is completely determined by the sequence
of hub departures. It is therefore natural to study the reduced process obtained by recording only the
successive choices of leaves at the hub.

Fix $\rho\in(0,1)$. For $m\ge 0$ and $i\in[n]:=\{1,\dots,n\}$, let
\[
C_m(i):=\sum_{r=1}^m \mathbf 1\{I_r=i\}
\]
be the number of times leaf $i$ has been chosen in the first $m$ hub departures. 
Here $C_0(i) = 0$ for all $i$.
We define the reduced
TSAW dynamics by
\begin{equation}\label{eq:star_rule}
\mathbb P(I_{m+1}=i\mid I_1,\dots,I_m)
=
\frac{\rho^{\,C_m(i)}}{\sum_{j=1}^n \rho^{\,C_m(j)}},
\qquad i\in[n],
\end{equation}
with $I_1$ being chosen uniformly at random.
This is exactly the hub-departure chain induced by TSAW on the star graph, after absorbing the
deterministic factor coming from the return step into the parameter $\rho$.
Let
\[
\tau_{\cov}^{(n)}:=\inf\bigl\{m\ge 0:\ C_m(i)\ge 1\ \text{for all }i\in[n]\bigr\}
\]
be the cover time, measured in hub departures. The usual discrete-time cover time differs only by a
factor of $2$. 

For random variables $Y_n$ and a deterministic constant $y$, we write
\(
Y_n \xrightarrow{\mathbb P} y
\)
if for every $\varepsilon>0$,
\[
\mathbb P(|Y_n-y|>\varepsilon)\to 0
\qquad\text{as }n\to\infty.
\]
We also write $R_n=o_{\mathbb P}(1)$ if $R_n\xrightarrow{\mathbb P}0$.

Our main result in this section is the following.

\begin{theorem}[cover time of TSAW on the star graph]
\label{thm:star_cover}
Fix $\rho\in(0,1)$ and set
\(
c_\rho:=\frac{1}{\log(1/\rho)}.
\)
Let $\tau_{\cov}^{(n)}$ be the cover time, measured in hub departures, of the TSAW dynamics on the
star graph with $n$ leaves and effective discount parameter $\rho$. Then for every $\varepsilon>0$ and
every $\delta>0$, there exists $n_0=n_0(\varepsilon,\delta,\rho)$ such that for all $n\ge n_0$,
\[
\mathbb P\left(
\left|
\frac{\tau_{\cov}^{(n)}}{n\log\log n}-c_\rho
\right|
\le \varepsilon
\right)
\ge 1-\delta.
\]
\end{theorem}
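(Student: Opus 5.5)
The plan is to use the exponential-race embedding suggested in the section introduction. To each leaf $i\in[n]$ attach an independent sequence of i.i.d.\ $\mathrm{Exp}(1)$ variables $(E_{i,r})_{r\ge0}$, and declare the $k$-th ring of leaf $i$ to occur at time
\[
T_i(k):=\sum_{r=0}^{k-1}\rho^{-r}E_{i,r},
\]
so that after $r$ rings, leaf $i$ rings at rate $\rho^{r}$. By memorylessness, the sequence of leaves in order of ringing is a Markov chain with exactly the law \eqref{eq:star_rule}: all rates are initially $1$, which gives $I_1$ uniform, and thereafter leaf $i$ wins the next ring with probability $\rho^{C_m(i)}/\sum_j\rho^{C_m(j)}$.

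Let $N_i(s):=\#\{k\ge1: T_i(k)\le s\}$. The processes $N_1,\dots,N_n$ are i.i.d. Let $T^*:=\max_i T_i(1)=\max_i E_{i,0}$. The cover time counts all rings up to and including the first ring of the last leaf, so
\[
\tau_{\cov}^{(n)}=\sum_{i=1}^n N_i(T^*).
\]

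\textbf{Step 1 (localise $T^*$).} $T^*$ is the maximum of $n$ i.i.d.\ standard exponentials, so $T^*-\log n$ converges to a Gumbel law. Hence with probability at least $1-\delta/2$ we have $s_-\le T^*\le s_+$, where $s_-:=\tfrac12\log n$ and $s_+:=2\log n$. Since each $N_i$ is nondecreasing, on this event
\[
\sum_i N_i(s_-)\le \tau_{\cov}^{(n)}\le \sum_i N_i(s_+).
\]
This sandwich is how I intend to handle the main obstacle, namely that $T^*$ depends on the same clocks as the $N_i$. The sandwich costs nothing at the scale of interest, because $\log s_\pm=\log\log n+O(1)$.

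\textbf{Step 2 (one-leaf estimates).} For fixed large $s$, I will show that $N(s)-c_\rho\log s$ has exponential tails uniformly in $s$, and hence
\[
\mathbb E N(s)=c_\rho\log s+O(1),\qquad \mathrm{Var}\,N(s)=O(1).
\]
\emph{Upper tail.} For $k=\lceil c_\rho\log s\rceil+m$ we have $T(k)\ge\rho^{-(k-1)}E_{k-1}$, so
\[
\mathbb P(N(s)\ge k)\le \mathbb P\bigl(E_{k-1}\le s\rho^{k-1}\bigr)\le s\rho^{k-1}\le \rho^{m-1}.
\]
\emph{Lower tail.} For $k=\lfloor c_\rho\log s\rfloor-m$ we have $\mathbb E\,T(k)\le \rho^{-k}/(1-\rho)\le s\rho^{m}/(1-\rho)$, so Markov's inequality gives
\[
\mathbb P(N(s)<k)=\mathbb P(T(k)>s)\le \rho^m/(1-\rho).
\]
Summing these geometric tails yields the moment bounds.

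\textbf{Step 3 (law of large numbers).} By Chebyshev and independence across leaves,
\[
\frac1n\sum_i N_i(s_\pm)=c_\rho\log s_\pm+O(1)+O_{\mathbb P}\bigl(n^{-1/2}\bigr).
\]
Dividing by $\log\log n$ and using $\log s_\pm/\log\log n\to1$, both sandwich bounds divided by $n\log\log n$ converge in probability to $c_\rho$. Choosing $n_0$ so that each error term is below $\varepsilon$ with probability at least $1-\delta/4$ gives the theorem.

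The main obstacle is conceptual rather than computational: it is the dependence between $T^*$ and the counts $N_i$, which the monotone sandwich in Step 1 removes. What remains to check carefully is the claim that the exponential-race embedding reproduces \eqref{eq:star_rule} exactly, including the uniform law of $I_1$. After that, the only estimate needing care is the pair of tail bounds in Step 2, and both reduce to single-exponential and Markov-inequality computations.
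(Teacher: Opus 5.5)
Your proposal is correct and follows essentially the same route as the paper. Both use the exponential-race embedding with $\tau=\sum_i N_i(\max_i E_{i,0})$, a monotone sandwich on a high-probability window for that maximum, uniform-in-$s$ bounds $\mathbb E N(s)=c_\rho\log s+O(1)$ and $\operatorname{Var}N(s)=O(1)$ for a single leaf, and Chebyshev across the $n$ i.i.d.\ leaves. The differences are only in execution. Your window $[\tfrac12\log n,\,2\log n]$ is cruder than the paper's $\log n\pm\log(2/\delta)$, which is harmless because only $\log s=\log\log n+O(1)$ matters. Your one-leaf tails, via a single-exponential bound and Markov's inequality, are more elementary than the paper's Chernoff bounds and give geometric rather than super-exponential tails, which still suffice for the $O(1)$ mean and variance.
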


The proof is based on an exponential-race embedding. For each leaf $i\in[n]$, let
$\{E_{i,k}:k\ge 0\}$ be independent random variables with
\[
E_{i,k}\sim \mathrm{Exp}(\rho^k),
\qquad i\in[n],\ k\ge 0,
\]
and assume that these families are mutually independent across $i$. Define the arrival times
\[
S_{i,r}:=\sum_{k=0}^{r-1} E_{i,k},
\qquad r\ge 1,
\qquad
S_{i,0}:=0,
\]
and the associated birth processes
\[
N_i(s):=\max\{r\ge 0:\ S_{i,r}\le s\},
\qquad s\ge 0.
\]
Let
\[
K(s):=\sum_{i=1}^n N_i(s)
\]
be the total number of births by time $s$, and let
\[
M:=\max_{1\le i\le n} S_{i,1}=\max_{1\le i\le n} E_{i,0}
\]
be the time at which the last leaf receives its first birth.

The first lemma shows that this continuous-time system reproduces the discrete TSAW dynamics exactly.

\begin{lemma}[exponential-race embedding]
\label{lem:star_embedding}
The reduced TSAW process \eqref{eq:star_rule} can be coupled with the birth processes
$\{N_i(s)\}_{i=1}^n$ so that the ordered sequence of births in the merged timeline
$\{(i,r): S_{i,r}\}_{i\in[n],\,r\ge1}$ has the same law as the sequence of hub departures
$(I_m)_{m\ge1}$. Under this coupling,
\(
\tau_{\cov}^{(n)}=K(M)
\)
almost surely, where $M=\max_{1\le i\le n} S_{i,1}$. 
For every $\varepsilon\in(0,1)$, if $C_\varepsilon:=\log(2/\varepsilon)$, then for all
sufficiently large $n$,
\[
\mathbb P\bigl(|M-\log n|\le C_\varepsilon\bigr)\ge 1-\varepsilon.
\]
On the event $\{|M-\log n|\le C_\varepsilon\}$, we have
\[
K(\log n-C_\varepsilon)\le \tau_{\cov}^{(n)}\le K(\log n+C_\varepsilon).
\]
\end{lemma}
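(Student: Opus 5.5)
The lemma has three parts: an exact coupling, a Gumbel-type concentration for $M$, and a sandwich of $\tau_{\cov}^{(n)}$ by $K$ at deterministic times. I will prove them in that order.

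\textbf{The coupling.} I will use the memoryless property of exponentials. At any time $s$ in the merged timeline, suppose each leaf $i$ has had $N_i(s)=c_i$ births. Then the residual waiting time until leaf $i$'s next birth is the unused part of $E_{i,c_i}\sim\mathrm{Exp}(\rho^{c_i})$. Conditioned on the past of the merged process (the $\sigma$-field generated by all births up to the current birth time), these residuals are independent exponentials with rates $\rho^{c_i}$. This holds by the strong Markov property of the vector $(N_1,\dots,N_n)$, which is a continuous-time Markov chain with independent coordinates.

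By the standard exponential race, the next birth therefore belongs to leaf $i$ with probability
\[
\frac{\rho^{c_i}}{\sum_j \rho^{c_j}}.
\]
This matches \eqref{eq:star_rule} with $C_m(i)=c_i$. At time $0$ all rates equal $1$, so the first birth is uniform, matching the law of $I_1$. Ties occur with probability zero.

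Induction on $m$ then shows that the embedded jump sequence has the law of $(I_m)_{m\ge1}$, and we take this as the coupling. Under it, $C_m(i)$ equals the number of births of leaf $i$ among the first $m$ births. Hence the cover time is the index of the birth at which the last leaf first appears, and that birth occurs at time $M=\max_i S_{i,1}$. So $\tau_{\cov}^{(n)}=K(M)$ almost surely, since $K$ counts births up to and including time $M$.

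\textbf{Concentration of $M$.} The $E_{i,0}$ are i.i.d.\ $\mathrm{Exp}(1)$, so
\[
\mathbb P(M\le x)=(1-e^{-x})^n .
\]
For the upper tail,
\[
\mathbb P(M>\log n+C)\le n e^{-\log n-C}=e^{-C}=\varepsilon/2 .
\]
For the lower tail,
\[
\mathbb P(M<\log n-C)=(1-e^{C}/n)^n\le \exp(-e^{C})=\exp(-2/\varepsilon).
\]
Since $\exp(-2/\varepsilon)\le \varepsilon/2$ for $\varepsilon\in(0,1)$, this is at most $\varepsilon/2$. The lower-tail bound requires $n>e^{C}$, which is where "sufficiently large $n$" enters. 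Adding the two tails gives the probability bound $1-\varepsilon$.

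\textbf{The sandwich.} Each $N_i$ is nondecreasing, so $K$ is nondecreasing. On the event $\{|M-\log n|\le C_\varepsilon\}$, monotonicity of $K$ together with $\tau_{\cov}^{(n)}=K(M)$ immediately gives
\[
K(\log n-C_\varepsilon)\le \tau_{\cov}^{(n)}\le K(\log n+C_\varepsilon).
\]

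\textbf{Main obstacle.} The only delicate step is making the coupling rigorous: one must justify that, after each birth, the residual clocks are again independent exponentials with the updated rates. I would handle this by viewing $(N_1,\dots,N_n)$ as a pure-birth continuous-time Markov chain with independent coordinates and computing its embedded jump chain directly. That computation yields exactly the transition rule \eqref{eq:star_rule}.
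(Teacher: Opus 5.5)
Your proposal is correct and follows the paper's proof essentially step for step. You use the memoryless/exponential-race coupling, identify $\tau_{\cov}^{(n)}=K(M)$ as the index of the last first-birth, bound the tails of $M$ using the explicit law $(1-e^{-y})^n$ (a union bound above, $1-u\le e^{-u}$ below), and get the sandwich from monotonicity of $K$. The one point the paper makes explicit and you leave implicit is non-explosion: each $E_{i,k}$ dominates an $\mathrm{Exp}(1)$ variable, or equivalently the total jump rate of your chain is at most $n$. This guarantees $K(M)<\infty$ and that the births up to time $M$ are exactly the first $K(M)$ terms of the enumerated sequence, so it deserves a sentence in your write-up.
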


\begin{proof}
Suppose that after $m$ births the current birth counts are $(c_1,\dots,c_n)$. By the memoryless
property of the exponential distribution, the residual waiting time to the next birth in process $i$
is $\mathrm{Exp}(\rho^{c_i})$, independently across $i$. Therefore the next birth occurs in process
$i$ with probability $\rho^{c_i}/\sum_{j=1}^n \rho^{c_j}$, which is exactly the transition rule
\eqref{eq:star_rule}. Iterating proves that the merged birth order reproduces the reduced TSAW
dynamics.

For each $i\in[n]$, the birth process $N_i$ is non-explosive. Indeed, since $\rho^k\le 1$ for all
$k\ge0$, each waiting time $E_{i,k}\sim \mathrm{Exp}(\rho^k)$ stochastically dominates an
$\mathrm{Exp}(1)$ random variable. Therefore the partial sums
\[
S_{i,r}=\sum_{k=0}^{r-1}E_{i,k}
\]
tend to $+\infty$ almost surely as $r\to\infty$, and hence $N_i(s)<\infty$ almost surely for every
finite $s\ge0$. Since $M=\max_{1\le i\le n}E_{i,0}$ is the maximum of finitely many exponential
random variables, we also have $M<\infty$ almost surely. It follows that
\[
K(M)=\sum_{i=1}^n N_i(M)<\infty
\qquad\text{almost surely}.
\]

Since $M=\max_i S_{i,1}$ is the time by which every process has had its first birth, the number of
births by time $M$ is exactly the number of hub departures required to visit every leaf at least once.
Hence $\tau_{\cov}^{(n)}=K(M)$ almost surely. Because $K(\cdot)$ is nondecreasing, the sandwich
bound follows immediately.

To control $M$, note that $E_{1,0},\dots,E_{n,0}$ are i.i.d.\ $\mathrm{Exp}(1)$, so
$\mathbb P(M\le y)=(1-e^{-y})^n$ for $y\ge0$. If $x\in[0,\log n]$, then
\[
\mathbb P(M\le \log n-x)=\left(1-\frac{e^x}{n}\right)^n\le e^{-e^x},
\]
while
\[
\mathbb P(M\ge \log n+x)
=1-\left(1-\frac{e^{-x}}{n}\right)^n
\le e^{-x}.
\]
Here we used the basic inequalities $1-u\le e^{-u}$ and
$1-(1-u)^n\le nu$ for $u\in[0,1]$. Thus $\mathbb P(|M-\log n|>x)\le e^{-x}+e^{-e^x}$. Taking $x=C_\varepsilon=\log(2/\varepsilon)$ gives
$e^{-x}+e^{-e^x}\le \varepsilon/2+e^{-2/\varepsilon}\le \varepsilon$ (note that $ue^{-u}\leq 1$ for $u>0$), and the claim follows.
\end{proof}

Therefore, to bound the cover time it suffices to study $K(s_n)$ for some deterministic $s_n=\log n + O(1)$. And furthermore, to study $K(s_n)=\sum_{i=1}^n N_i(s_n)$, which is a sum of i.i.d. random variables $N_i(s_n)$'s, it suffices to study the individual distributions. This is reflected in the following lemmas.

\begin{lemma}[birth process with exponentially decaying rates]
\label{lem:one_leaf_birth}
Fix $\rho\in(0,1)$ and, for $s\ge1$, let $N(s)$ denote a generic copy of $N_i(s)$. Define
\[
i_*(s):=\left\lfloor \log_{1/\rho} s\right\rfloor+1,
\qquad\text{so that}\qquad
\rho^{\,i_*(s)-1}s\in[1,1/\rho].
\]
Set
\[
C_1(\rho):=e^{1/\rho},
\qquad
C_2(\rho):=\prod_{m=0}^{\infty}\frac{1}{1-\frac12\rho^m},
\]
\[
A_\rho:=1+\frac{C_1(\rho)}{1-\sqrt{\rho}}+\frac{2C_2(\rho)}{1-\rho},
\]
and
\[
B_\rho
:=
2\left(
C_1(\rho)\frac{\sqrt{\rho}(1+\sqrt{\rho})}{(1-\sqrt{\rho})^2}
+
2C_2(\rho)\frac{1+\rho}{(1-\rho)^2}
\right)
+
2\left(
\frac{C_1(\rho)}{1-\sqrt{\rho}}+\frac{2C_2(\rho)}{1-\rho}
\right)^2.
\]
Then, for every $s\ge1$, the following hold:

\begin{enumerate}
\item[\textnormal{(i)}] For every integer $k\ge0$,
\[
\mathbb P\bigl(N(s)\ge i_*(s)+k\bigr)
\le
C_1(\rho)\,\rho^{k(k+1)/2}
\le
C_1(\rho)\,\rho^{k/2}.
\]

\item[\textnormal{(ii)}] For every integer $k\ge1$,
\[
\mathbb P\bigl(N(s)\le i_*(s)-k\bigr)
\le
C_2(\rho)\exp\bigl(-\tfrac12\rho^{-k+1}\bigr),
\]
where the event is understood to be empty when $i_*(s)-k<0$.

\item[\textnormal{(iii)}] One has
\[
\left|
\mathbb E[N(s)]-\frac{\log s}{\log(1/\rho)}
\right|
\le A_\rho,
\qquad
\operatorname{Var}(N(s))\le B_\rho.
\]
\end{enumerate}
\end{lemma}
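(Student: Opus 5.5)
The plan is to work directly with the arrival times $S_r=\sum_{k=0}^{r-1}E_k$, $E_k\sim\mathrm{Exp}(\rho^k)$ independent, using the identities $\{N(s)\ge r\}=\{S_r\le s\}$ and $\{N(s)\le r\}=\{S_{r+1}>s\}$. Write $i_*=i_*(s)$. The normalization $\rho^{i_*-1}s\in[1,1/\rho]$ means that the rate at step $i_*+j$ times $s$ is about $\rho^{j}$. So the $E_k$ with $k$ a little above $i_*$ have means much larger than $s$, while those with $k$ a little below $i_*$ have means much smaller than $s$.

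For (i), I will use the necessary condition that on $\{S_{i_*+k}\le s\}$ every one of the waiting times $E_{i_*},\dots,E_{i_*+k-1}$ is at most $s$. Independence then gives
\[
\mathbb P(N(s)\ge i_*+k)\le \prod_{j=0}^{k-1}\bigl(1-e^{-\rho^{i_*+j}s}\bigr)\le \prod_{j=0}^{k-1}\rho^{i_*+j}s .
\]
Since $\rho^{i_*}s\le 1$, the $j$-th factor is at most $\rho^{j}$, and the product is bounded by $\rho^{k(k-1)/2}$. To reach the stated exponent $k(k+1)/2$, I will instead use the factor $\rho^{i_*-1}s\le 1/\rho$ carefully, or include the factor $E_{i_*-1}$, absorbing the resulting loss of one power of $\rho^{-1}$ per factor into $C_1=e^{1/\rho}$. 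One option is a Chernoff bound $\mathbb P(S\le s)\le e^{\lambda s}\prod_j \frac{\rho^{i_*+j}}{\rho^{i_*+j}+\lambda}$ with $\lambda=1/s$, which produces the constant $e^{\rho^{i_*-1}s\cdot\ldots}\le e^{1/\rho}$ and factors $\le\rho^{j+1}$. The second inequality, $\rho^{k(k+1)/2}\le\rho^{k/2}$, is trivial.

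For (ii), the event $\{N(s)\le i_*-k\}$ equals $\{S_{i_*-k+1}>s\}$, a large-deviation event for a sum of exponentials with small means. I will use the Chernoff bound
\[
\mathbb P(S_r>s)\le e^{-\lambda s}\prod_{m<r}\frac{\rho^m}{\rho^m-\lambda}
\]
with $r=i_*-k+1$. The choice is $\lambda=\tfrac12\rho^{r-1}$, half of the smallest rate. Reindexing $m=r-1-m'$, each factor becomes $1/(1-\tfrac12\rho^{m'})$, so the product is at most $C_2(\rho)$. The exponent is $\tfrac12\rho^{i_*-k}s=\tfrac12\rho^{-k+1}\rho^{i_*-1}s\ge\tfrac12\rho^{-k+1}$, which gives exactly the claimed bound. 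This part should be clean.

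For (iii), I will write $N-i_*=(N-i_*)^+-(i_*-N)^+$ and use the tail-sum formulas $\mathbb E[(N-i_*)^+]=\sum_{k\ge1}\mathbb P(N\ge i_*+k)$ and $\mathbb E[(i_*-N)^+]=\sum_{k\ge1}\mathbb P(N\le i_*-k)$. By (i), the first sum is at most $C_1\sum\rho^{k/2}\le C_1/(1-\sqrt\rho)$. For the second, I bound $\exp(-\tfrac12\rho^{-k+1})\le 2\rho^{k-1}$ using $e^{-x}\le 1/x$, which gives at most $2C_2/(1-\rho)$. Since $|i_*-\log s/\log(1/\rho)|\le1$, adding these yields $A_\rho$. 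For the variance, I will use $\operatorname{Var}(N)\le\mathbb E[(N-i_*)^2]\le 2\mathbb E[((N-i_*)^+)^2]+2\mathbb E[((i_*-N)^+)^2]$, with $\mathbb E[(X^+)^2]=\sum_k(2k-1)\mathbb P(X\ge k)$. The sums $\sum k\rho^{k/2}$ and $\sum k\rho^{k}$ produce the $\frac{\sqrt\rho(1+\sqrt\rho)}{(1-\sqrt\rho)^2}$ and $\frac{1+\rho}{(1-\rho)^2}$ type terms. The squared-mean term in $B_\rho$ provides slack for any cross term that arises if I center at $i_*$ versus the mean.

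I expect the main obstacle to be matching the exact constants in (i), specifically the exponent $k(k+1)/2$ with constant $e^{1/\rho}$. My first attempt will be the Chernoff/Laplace bound with $\lambda=1/s$ over the factors $j=0,\dots,k-1$. If it falls short by one power, I will fall back on the simpler product bound, accepting the slightly weaker exponent, since it still suffices for (iii).
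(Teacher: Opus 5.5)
\paragraph{Parts (ii) and (iii).} These are sound and follow the paper's route: a Chernoff bound tilted at half the smallest rate, then tail-sum formulas for the mean and the second moment.

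In (ii), your indexing $\{N(s)\le i_*-k\}=\{S_{i_*-k+1}>s\}$ with $\lambda=\tfrac12\rho^{i_*-k}$ is the correct one. It also covers $k=i_*$ without a separate case. The paper's displayed identity $\mathbb P(N(s)\le r)=\mathbb P(S_r\ge s)$ is off by one, and the stated exponent $\rho^{-k+1}$ is exactly what your version produces.

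In (iii), $\operatorname{Var}(N)\le\mathbb E[(N-i_*)^2]$ is sharper than the paper's split, so you do not even need the squared-mean term of $B_\rho$.

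\paragraph{The gap is in (i), at the step you flagged.} The Chernoff bound with $\lambda=1/s$ does not deliver what you claim. The prefactor is $e^{\lambda s}=e$, and the factors satisfy only
\[
\frac{\rho^{i_*+j}}{\rho^{i_*+j}+1/s}\le \rho^{i_*+j}s<\rho^{j},
\]
because $\rho^{i_*}s\in[\rho,1)$ can be arbitrarily close to $1$. In that regime the $j$-th factor is about $\rho^j/(1+\rho^j)$, so the bound is of order $\rho^{k(k-1)/2}$. That is off from $\rho^{k(k+1)/2}$ by a factor $\rho^{-k}$, which no constant can absorb. The same objection applies to ``absorbing one power of $\rho^{-1}$ per factor into $C_1$''.

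The missing idea is to tilt at the rate of $E_{i_*-1}$, i.e.\ $\lambda=\rho^{i_*-1}$, which is up to $1/\rho$ times larger than $1/s$. Then:
\begin{itemize}
\item $\lambda s=\rho^{i_*-1}s<1/\rho$ gives the prefactor $e^{\lambda s}\le e^{1/\rho}=C_1(\rho)$;
\item the factors with $m<i_*$ are at most $1$;
\item for $j\ge 0$, the remaining factors satisfy
\[
\frac{\rho^{i_*+j}}{\rho^{i_*+j}+\rho^{i_*-1}}\le\rho^{j+1}.
\]
\end{itemize}
So the product is $\rho^{1+2+\cdots+k}=\rho^{k(k+1)/2}$. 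This is the paper's argument: the tilt costs only the constant $e^{1/\rho}$ and gains one power of $\rho$ in every factor.

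Your fallback bound $\rho^{k(k-1)/2}$ does not prove (i) as stated. It does, however, imply the weaker inequality $\mathbb P(N(s)\ge i_*+k)\le C_1(\rho)\rho^{k/2}$ for every $k\ge 0$; for $k=1$ use $1\le e^{1/\rho}\sqrt{\rho}$. That weaker form is all (iii) uses, so (iii) with the exact constants $A_\rho$ and $B_\rho$ holds either way.
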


This lemma can be proved using moment generating functions. We defer this proof to Appendix \ref{apx:one_leaf_birth}.

\begin{lemma}[concentration of the total number of departures]
\label{lem:K_concentration}
Fix $\rho\in(0,1)$, set $c_\rho:=1/\log(1/\rho)$, and let $c\in\mathbb{R}$ be fixed. For each
$n$ sufficiently large, define $s_n:=\log n+c$. Then for every $\varepsilon>0$ and every $\delta>0$,
there exists $n_0=n_0(\varepsilon,\delta,\rho,c)$ such that for all $n\ge n_0$,
\[
\mathbb P\left(
\left|
\frac{K(s_n)}{n\log\log n}-c_\rho
\right|
\le \varepsilon
\right)\ge 1-\delta.
\]
\end{lemma}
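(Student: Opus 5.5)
We will prove this with a second-moment (Chebyshev) argument, using that $K(s_n)=\sum_{i=1}^n N_i(s_n)$ is a sum of $n$ i.i.d.\ copies of $N(s_n)$. The processes $N_i$ are built from the mutually independent families $\{E_{i,k}\}_{k\ge0}$, so for fixed deterministic $s_n$ the summands are independent with the common law of $N(s)$ from Lemma~\ref{lem:one_leaf_birth}. For $n$ large we have $s_n=\log n+c\ge 1$, so part (iii) of that lemma applies.

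The first step is the mean. By Lemma~\ref{lem:one_leaf_birth}(iii),
\[
\Bigl|\mathbb E[K(s_n)]-n\,c_\rho\log s_n\Bigr|\le nA_\rho .
\]
Since $\log s_n=\log(\log n+c)=\log\log n+\log\bigl(1+c/\log n\bigr)=\log\log n+o(1)$, this gives
\[
\frac{\mathbb E[K(s_n)]}{n\log\log n}=c_\rho+\frac{c_\rho\,o(1)+O(A_\rho)}{\log\log n}\longrightarrow c_\rho .
\]
So for $n\ge n_1(\varepsilon,\rho,c)$ the deterministic error is at most $\varepsilon/2$.

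The second step is the fluctuation. By independence and part (iii), $\operatorname{Var}(K(s_n))=n\operatorname{Var}(N(s_n))\le nB_\rho$. Chebyshev's inequality then gives
\[
\mathbb P\Bigl(\bigl|K(s_n)-\mathbb E K(s_n)\bigr|>\tfrac{\varepsilon}{2}\,n\log\log n\Bigr)\le \frac{4nB_\rho}{\varepsilon^2 n^2(\log\log n)^2}=\frac{4B_\rho}{\varepsilon^2 n(\log\log n)^2}.
\]
The right-hand side tends to $0$, so it is at most $\delta$ for $n\ge n_2(\varepsilon,\delta,\rho)$. Taking $n_0=\max(n_1,n_2)$ and combining the two steps with the triangle inequality yields the claim.

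The only delicate point is the uniformity of the constants in Lemma~\ref{lem:one_leaf_birth}(iii). The bounds $A_\rho$ and $B_\rho$ do not depend on $s$, which is what makes the argument work even though $s_n\to\infty$. We must also check $s_n\ge1$, which requires $n$ large when $c<0$, and that $\log\log n>0$. With these checks the argument is routine. In fact Chebyshev has much room to spare, since the relative variance is $O(1/n)$, so no finer concentration inequality is needed.
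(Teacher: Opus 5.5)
Your proposal is correct and follows essentially the same route as the paper: write $K(s_n)$ as a sum of $n$ i.i.d.\ copies of $N(s_n)$, then use the $s$-uniform bounds $A_\rho$, $B_\rho$ from Lemma~\ref{lem:one_leaf_birth}(iii) together with $\log s_n=\log\log n+O(1/\log n)$ to place the normalized mean within $\varepsilon/2$ of $c_\rho$, and finish with Chebyshev using $\operatorname{Var}(K(s_n))\le nB_\rho$. The only difference is that the paper makes your $o(1)$ term explicit via $|\log(1+u)|\le 2|u|$ for $|u|\le 1/2$.
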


The proof of this lemma uses the fact that for any deterministic sequence $(s_n)_{n\geq 1}$, $K(s_n)$ is the sum of $n$ i.i.d. random variables. The proof
\red
of
\black
this lemma can be found in Appendix \ref{sec:proof-k-concentration}.

\begin{proof}[Proof of Theorem~\ref{thm:star_cover}]
Fix $\varepsilon>0$ and $\delta>0$. We must prove that, for all sufficiently large $n$,
\[
\mathbb P\left(
\left|
\frac{\tau_{\cov}^{(n)}}{n\log\log n}-c_\rho
\right|
\le \varepsilon
\right)\ge 1-\delta.
\]

Set $\delta_1=\delta_2=\delta_3=\delta/3$ and let $C:=\log(2/\delta_1)$. By
Lemma~\ref{lem:star_embedding}, there exists $n_1$ such that for all $n\ge n_1$, the event
$E_n:=\{|M-\log n|\le C\}$ satisfies $\mathbb P(E_n)\ge 1-\delta_1$. On $E_n$, we have the
deterministic sandwich
\[
K(\log n-C)\le \tau_{\cov}^{(n)}\le K(\log n+C).
\]

Now define $s_n^-:=\log n-C$ and $s_n^+:=\log n+C$. Since $C$ is fixed, both sequences satisfy
$s_n^\pm=\log n+O(1)$ and $s_n^\pm\ge1$ for all sufficiently large $n$. Applying
Lemma~\ref{lem:K_concentration} to $(s_n^-)$ with parameters $(\varepsilon,\delta_2)$, we
obtain some $n_2$ such that for all $n\ge n_2$,
\[
\mathbb P\left(
\left|
\frac{K(s_n^-)}{n\log\log n}-c_\rho
\right|
\le \varepsilon
\right)\ge 1-\delta_2.
\]
Likewise, applying the same proposition to $(s_n^+)$ with parameters $(\varepsilon,\delta_3)$,
there exists $n_3$ such that for all $n\ge n_3$,
\[
\mathbb P\left(
\left|
\frac{K(s_n^+)}{n\log\log n}-c_\rho
\right|
\le \varepsilon
\right)\ge 1-\delta_3.
\]

For $n\ge \max\{n_1,n_2,n_3\}$, let
\[
A_n^-:=\left\{
\left|
\frac{K(s_n^-)}{n\log\log n}-c_\rho
\right|
\le \varepsilon
\right\}
\quad\text{and}\quad
A_n^+:=\left\{
\left|
\frac{K(s_n^+)}{n\log\log n}-c_\rho
\right|
\le \varepsilon
\right\}.
\]
Then $\mathbb P(E_n)\ge 1-\delta_1$, $\mathbb P(A_n^-)\ge 1-\delta_2$, and
$\mathbb P(A_n^+)\ge 1-\delta_3$.

Consider the good event $G_n:=E_n\cap A_n^-\cap A_n^+$. On $G_n$, the inequalities defining
$A_n^\pm$ give
\[
(c_\rho-\varepsilon)n\log\log n \le K(s_n^-)\le (c_\rho+\varepsilon)n\log\log n
\]
and
\[
(c_\rho-\varepsilon)n\log\log n \le K(s_n^+)\le (c_\rho+\varepsilon)n\log\log n.
\]
Since $E_n$ also implies $K(s_n^-)\le \tau_{\cov}^{(n)}\le K(s_n^+)$, we deduce that on $G_n$,
\[
(c_\rho-\varepsilon)n\log\log n \le \tau_{\cov}^{(n)}\le (c_\rho+\varepsilon)n\log\log n.
\]
Equivalently,
\[
\left|
\frac{\tau_{\cov}^{(n)}}{n\log\log n}-c_\rho
\right|
\le \varepsilon.
\]
Thus
\[
G_n \subseteq
\left\{
\left|
\frac{\tau_{\cov}^{(n)}}{n\log\log n}-c_\rho
\right|
\le \varepsilon
\right\}.
\]

Finally, by the union bound,
\[
\mathbb P(G_n)
\ge 1-\mathbb P(E_n^c)-\mathbb P((A_n^-)^c)-\mathbb P((A_n^+)^c)
\ge 1-\delta_1-\delta_2-\delta_3
=1-\delta.
\]
Therefore, for all $n\ge \max\{n_1,n_2,n_3\}$,
\[
\mathbb P\left(
\left|
\frac{\tau_{\cov}^{(n)}}{n\log\log n}-c_\rho
\right|
\le \varepsilon
\right)\ge 1-\delta.
\]
This proves the theorem.
\end{proof}

Theorem~\ref{thm:star_cover} shows that even in this simple model, exponential self-avoidance
changes the exploration scale from the classical coupon-collector order $n\log n$ to the much smaller
order $n\log\log n$. The remainder of the paper turns from cover time to the main theme of empirical
integral estimation, where a related self-avoidance mechanism yields explicit control of occupation and
transition discrepancies.

\section{Improving SRRW for MCMC Estimation}\label{sec:self_balancing}

Before discussing how TSAW transition rules can be used in a random walk setting, we first analyze the discrete distribution scenario. This will be the building block of a non-Markovian random walk we use to accelerate the MCMC estimation.

Fix $d\ge 2$, a target distribution $p=(p_1,\dots,p_d)$ satisfying
\[
p_i>0,\qquad \sum_{i=1}^d p_i=1,
\qquad p_{\min}:=\min_{i\in[d]}p_i>0,
\]
and a parameter $\lambda>0$.

Let $(X_t)_{t\ge 1}$ be an $[d]:=\{1,\dots,d\}$-valued process adapted to the filtration
$(\mathcal F_t)_{t\ge 0}$, where $\mathcal F_t:=\sigma(X_1,\dots,X_t)$. Define the empirical counts
\[
L_t(i):=\sum_{s=1}^t \mathbf 1\{X_s=i\},
\qquad L_0(i):=0,
\]
and the excess/deficit process
\[
\Delta_i(t):=L_t(i)-tp_i,
\qquad
\Delta(t):=(\Delta_1(t),\dots,\Delta_d(t)).
\]
Then
\[
\sum_{i=1}^d \Delta_i(t)=0
\qquad\text{for all }t\ge 0.
\]

\paragraph{The local TSAW sampling rule.}
Fix $\lambda > 0$.
For $t\ge 0$, define the normalizing factor
\[
Z_t:=\sum_{k=1}^d e^{-\lambda \Delta_k(t)},
\]
and assume that
\begin{equation}\label{eq:rule}
\Pbb(X_{t+1}=i\mid \mathcal F_t)
=
q_t(i)
:=
\frac{e^{-\lambda \Delta_i(t)}}{Z_t},
\qquad i\in[d].
\end{equation}
The deficit recursion is then
\begin{equation}\label{eq:delta_update}
\Delta_i(t+1)=\Delta_i(t)+\mathbf 1\{X_{t+1}=i\}-p_i,
\qquad i\in[d].
\end{equation}

Define the maximum excess and the near-maximum set
\[
M(t):=\max_{i\in[d]} \Delta_i(t),
\qquad
A(t):=\{i\in[d]:\Delta_i(t)\ge M(t)-1\}.
\]
The unit width of \(A(t)\) is chosen to guarantee that if \(X_{t+1}\notin A(t)\), then after the
update in \eqref{eq:delta_update}, the chosen coordinate still cannot exceed the previous maximizers.
The following theorem is the basic Lyapunov estimate for the maximal deficit.

\medskip
\begin{theorem}[Exponential drift for the maximal excess]\label{thm:exp_drift_M}
Fix $\alpha>0$ and define
\[
M(t):=\max_{i\in[d]}\Delta_i(t),
\qquad
A(t):=\{i\in[d]:\Delta_i(t)\ge M(t)-1\},
\qquad
V(t):=e^{\alpha M(t)}.
\]
Let
\[
c_d:=\frac{d}{d-1},
\qquad
\rho:=\exp\Big(-\frac{\alpha p_{\min}}{2}\Big)\in(0,1),
\]
and set
\begin{equation}\label{eq:R_choice_streamlined}
R
:=
\frac{1}{\lambda c_d}\,
\log\Bigg(\frac{d e^{\lambda}(e^{\alpha}-1)}{e^{\alpha p_{\min}/2}-1}\Bigg),
\qquad
B:=\exp\{\alpha(R+1-p_{\min})\}.
\end{equation}
Note that $R > 0$ and $B \ge 1$.
Then for all $t\ge 0$,
\begin{equation}\label{eq:drift_global_streamlined}
\Ebb[V(t+1)\mid \mathcal F_t]\le \rho\,V(t) + B\,\mathbf 1\{M(t)\le R\}.
\end{equation}
Consequently,
\begin{equation}\label{eq:unif_exp_streamlined}
\sup_{t\ge 0}\ \Ebb\big[e^{\alpha M(t)}\big]
\le
\max\Big\{e^{\alpha M(0)},\ \frac{B}{1-\rho}\Big\}<\infty.
\end{equation}
\end{theorem}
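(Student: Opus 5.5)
The plan is to establish a pathwise one-step bound on $M(t+1)$ that depends only on whether $X_{t+1}\in A(t)$. Then I would bound $q_t(A(t))$ using the zero-sum constraint $\sum_i\Delta_i(t)=0$. Finally, I would split into the cases $M(t)>R$ and $M(t)\le R$, and iterate the resulting drift inequality.

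\emph{Step 1 (deterministic increments).} By \eqref{eq:delta_update}, every unchosen coordinate $j$ decreases by $p_j\ge p_{\min}$. Suppose first that $X_{t+1}=i\notin A(t)$. Then $\Delta_i(t)<M(t)-1$, so $\Delta_i(t+1)<M(t)-p_i\le M(t)-p_{\min}$, while all other coordinates are at most $M(t)-p_{\min}$. Hence $M(t+1)\le M(t)-p_{\min}$; this is exactly the purpose of the unit width of $A(t)$. Suppose instead that $X_{t+1}=i\in A(t)$. Then $\Delta_i(t+1)\le M(t)+1-p_{\min}$ and the others are at most $M(t)$, so $M(t+1)\le M(t)+1-p_{\min}$. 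Consequently
\[
\Ebb[V(t+1)\mid\mathcal F_t]\le V(t)\,e^{-\alpha p_{\min}}\Bigl(1+q_t(A(t))\,(e^{\alpha}-1)\Bigr).
\]

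\emph{Step 2 (small probability of the near-maximal set).} This is the main step. For $i\in A(t)$ we have $e^{-\lambda\Delta_i(t)}\le e^{\lambda}e^{-\lambda M(t)}$, so the numerator of $q_t(A(t))$ is at most $d\,e^{\lambda}e^{-\lambda M(t)}$. For the denominator, the relation $\sum_i\Delta_i(t)=0$ forces $\min_k\Delta_k(t)\le -M(t)/(d-1)$, which gives $Z_t\ge e^{\lambda M(t)/(d-1)}$. Combining the two bounds,
\[
q_t(A(t))\le d\,e^{\lambda}\exp\bigl(-\lambda c_d M(t)\bigr),\qquad c_d=\tfrac{d}{d-1}.
\]
The definition of $R$ in \eqref{eq:R_choice_streamlined} is exactly the threshold at which this bound, multiplied by $e^\alpha-1$, equals $e^{\alpha p_{\min}/2}-1$. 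Positivity of $R$ follows because $d e^{\lambda}(e^{\alpha}-1)>e^{\alpha p_{\min}/2}-1$, since $p_{\min}<1$ and $d\ge2$.

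\emph{Step 3 (the two cases and iteration).} If $M(t)>R$, Steps 1--2 give the multiplicative factor $e^{-\alpha p_{\min}}\,e^{\alpha p_{\min}/2}=\rho$, so $\Ebb[V(t+1)\mid\mathcal F_t]\le\rho V(t)$. If $M(t)\le R$, Step 1 gives pathwise $V(t+1)\le e^{\alpha(R+1-p_{\min})}=B$, hence $\Ebb[V(t+1)\mid\mathcal F_t]\le B\le \rho V(t)+B$. Together these two cases prove \eqref{eq:drift_global_streamlined}. Taking expectations yields $\Ebb V(t+1)\le\rho\,\Ebb V(t)+B$. An induction then shows $\Ebb V(t)\le\max\{V(0),B/(1-\rho)\}$: if $\Ebb V(t)\le m:=\max\{\cdot\}$, then $\rho m+B\le \rho m+(1-\rho)m=m$. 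This gives \eqref{eq:unif_exp_streamlined}. Here $V(0)=e^{\alpha M(0)}=1$ because $\Delta(0)=0$.

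The only delicate point is the lower bound on $Z_t$ in Step 2. It uses the zero-sum constraint to make the minimal coordinate as negative as $-M/(d-1)$, and this is what produces the factor $c_d$ in $R$. Everything else is bookkeeping.
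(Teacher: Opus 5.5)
Your proposal is correct and follows essentially the same route as the paper. It uses the same pathwise bounds ($M(t+1)\le M(t)-p_{\min}$ when $X_{t+1}\notin A(t)$, and $M(t+1)\le M(t)+1-p_{\min}$ always), the same estimate $\Pbb(X_{t+1}\in A(t)\mid\mathcal F_t)\le d\,e^{\lambda}e^{-\lambda c_d M(t)}$ via the zero-sum lower bound on $Z_t$, the same two-regime split at $R$, and the same induction on $a_{t+1}\le\rho a_t+B$. The differences are only cosmetic: you split at $M(t)>R$ rather than $M(t)\ge R$, and you state explicitly that $V(0)=1$, which the paper records in a footnote.
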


\begin{proof}
Fix $t\ge 0$.
Since $\sum_{i=1}^d \Delta_i(t)=0$ and $\max_i \Delta_i(t)=M(t)$, we have
\begin{equation}\label{eq:min_leq_streamlined}
\min_{k\in[d]}\Delta_k(t)\le -\frac{M(t)}{d-1}.
\end{equation}
Indeed, if $\min_k\Delta_k(t)>-M(t)/(d-1)$, then
\[
\sum_{i=1}^d \Delta_i(t)>M(t)+(d-1)\Big(-\frac{M(t)}{d-1}\Big)=0,
\]
contradicting $\sum_i\Delta_i(t)=0$. Therefore
\begin{equation}\label{eq:Z_lower_streamlined}
Z_t=\sum_{k=1}^d e^{-\lambda \Delta_k(t)}
\ge e^{-\lambda\min_k\Delta_k(t)}
\ge \exp\Big(\frac{\lambda}{d-1}M(t)\Big).
\end{equation}

If $i\in A(t)$, then $\Delta_i(t)\ge M(t)-1$, so
$e^{-\lambda \Delta_i(t)}\le e^{-\lambda(M(t)-1)}=e^{\lambda}e^{-\lambda M(t)}$. Summing over $A(t)$ gives
\[
\sum_{i\in A(t)} e^{-\lambda \Delta_i(t)}
\le |A(t)|\,e^{\lambda}e^{-\lambda M(t)}
\le d\,e^{\lambda}e^{-\lambda M(t)}.
\]
Dividing by 
$Z_t$ and using \eqref{eq:rule} and \eqref{eq:Z_lower_streamlined},
we obtain
\begin{equation}\label{eq:pA_streamlined}
\Pbb(X_{t+1}\in A(t)\mid \mathcal F_t)
\le
d\,e^{\lambda}\exp\Big(-\lambda\frac{d}{d-1}M(t)\Big)
=
d\,e^{\lambda}e^{-\lambda c_d M(t)}.
\end{equation}

We next compare $M(t+1)$ with $M(t)$. If $X_{t+1}=j\notin A(t)$, then
$\Delta_j(t)\le M(t)-1$, so by \eqref{eq:delta_update},
\[
\Delta_j(t+1)=\Delta_j(t)+1-p_j\le (M(t)-1)+1-p_{\min}=M(t)-p_{\min}.
\]
For $k\neq j$, we have $\Delta_k(t+1)=\Delta_k(t)-p_k\le M(t)-p_{\min}$. Hence
\begin{equation}\label{eq:M_down_streamlined}
X_{t+1}\notin A(t)\quad\Rightarrow\quad M(t+1)\le M(t)-p_{\min}.
\end{equation}
Also, for every $i\in[d]$,
\[
\Delta_i(t+1)=\Delta_i(t)+\mathbf 1\{X_{t+1}=i\}-p_i\le M(t)+1-p_{\min},
\]
so we always have
\begin{equation}\label{eq:M_up_streamlined}
M(t+1)\le M(t)+(1-p_{\min}).
\end{equation}

Using \eqref{eq:M_down_streamlined} and \eqref{eq:M_up_streamlined}, we get
\[
e^{\alpha M(t+1)}
\le
e^{\alpha(M(t)-p_{\min})}\mathbf 1\{X_{t+1}\notin A(t)\}
+
e^{\alpha(M(t)+1-p_{\min})}\mathbf 1\{X_{t+1}\in A(t)\}.
\]
Taking conditional expectation and factoring out $e^{\alpha(M(t)-p_{\min})}$ gives
\begin{equation}\label{eq:drift_mid_streamlined}
\Ebb[e^{\alpha M(t+1)}\mid \mathcal F_t]
\le
e^{\alpha(M(t)-p_{\min})}
\Big(1+(e^{\alpha}-1)\,\Pbb(X_{t+1}\in A(t)\mid \mathcal F_t)\Big).
\end{equation}
Applying \eqref{eq:pA_streamlined} yields
\begin{equation}    \label{eq:vtplus1bd}
\Ebb[V(t+1)\mid \mathcal F_t]
\le
e^{\alpha M(t)} e^{-\alpha p_{\min}}
\Big(1+d e^{\lambda}(e^{\alpha}-1)e^{-\lambda c_d M(t)}\Big).
\end{equation}

If $M(t)\ge R$, with $R$ given by \eqref{eq:R_choice_streamlined}, then
\[
d e^{\lambda}(e^{\alpha}-1)e^{-\lambda c_d M(t)}
\le
d e^{\lambda}(e^{\alpha}-1)e^{-\lambda c_d R}
=
e^{\alpha p_{\min}/2}-1,
\]
so the 
term in the
bracket 
on the RHS of \eqref{eq:vtplus1bd}
is at most $e^{\alpha p_{\min}/2}$. Therefore
\[
\Ebb[V(t+1)\mid \mathcal F_t]
\le
e^{\alpha M(t)}e^{-\alpha p_{\min}}e^{\alpha p_{\min}/2}
=
\rho\,V(t).
\]

If instead $M(t)\le R$, then \eqref{eq:M_up_streamlined} gives
$M(t+1)\le R+(1-p_{\min})$, hence
\[
V(t+1)\le e^{\alpha(R+1-p_{\min})}=B.
\]
Thus on $\{M(t)\le R\}$ we have
\[
\Ebb[V(t+1)\mid \mathcal F_t]\le B.
\]
Combining the two regimes proves \eqref{eq:drift_global_streamlined}.

Finally, set $a_t:=\Ebb[V(t)]$. Taking expectations in \eqref{eq:drift_global_streamlined} gives
$a_{t+1}\le \rho a_t+B$ for all $t\ge0$. Let
\blue
\footnote{Since $M(0) = 0$ and $B \ge 1$, we have $C = \frac{B}{1-\rho}$, 
but we prefer to write $C$ this way for compatibility with the version of this
calculation that appears when we discuss general graphs rather than just the
star-graph.
}
\black
\[
C:=\max\Big\{e^{\alpha M(0)},\frac{B}{1-\rho}\Big\}.
\]
Then $a_0\le C$, and if $a_t\le C$ then
\[
a_{t+1}\le \rho C+B\le C,
\]
since $C\ge B/(1-\rho)$. By induction, $a_t\le C$ for all $t\ge0$, which proves
\eqref{eq:unif_exp_streamlined}.
\end{proof}

A direct consequence of \eqref{eq:unif_exp_streamlined} is the following uniform-in-$t$ exponential tail bound for each coordinate. Writing
\(
C:=\max\{e^{\alpha M(0)},\frac{B}{1-\rho}\},
\)
Markov's inequality gives
\[
\Pbb\bigl(\Delta_i(t)>x\bigr)\le \Pbb\bigl(M(t)>x\bigr)\le C_\alpha e^{-\alpha x},
\qquad x>0.
\]
On the other hand, since $\sum_{j=1}^d \Delta_j(t)=0$, we have
\(
-\Delta_i(t)\le \sum_{j\neq i}\Delta_j(t)\le (d-1)M(t),
\)
so
\[
\Pbb\bigl(\Delta_i(t)<-x\bigr)\le \Pbb\Bigl(M(t)>\frac{x}{d-1}\Bigr)\le C_\alpha e^{-\alpha x/(d-1)}.
\]
Therefore, for every $i\in[d]$, every $t\ge0$, and every $x>0$,
\[
\Pbb\bigl(|L_t(i)-tp_i|>x\bigr)
=
\Pbb\bigl(|\Delta_i(t)|>x\bigr)
\le
C_\alpha\Bigl(e^{-\alpha x}+e^{-\alpha x/(d-1)}\Bigr)
\le
2C_\alpha e^{-\alpha x/(d-1)}.
\]

And the empirical counts remain within $O(1)$ of their targets $tp_i$ at any finite level of trajectory length $t\geq 0$.

\subsection{Lifting the local TSAW to a random walk}\label{subsec:local_global_model}

Let $V$ be a nonempty finite set, and let $(X_t)_{t\ge 0}$ be a $V$-valued process adapted to the
natural filtration
\[
\mathcal F_t:=\sigma(X_0,X_1,\dots,X_t),\qquad t\ge 0.
\]

Fix a row-stochastic matrix $P=(P_{ij})_{i,j\in V}$:
\[
P_{ij}\ge 0,\qquad \sum_{j\in V}P_{ij}=1\quad\text{for all }i\in V.
\]
Throughout this subsection, irreducibility is assumed, but reversibility is not.

For $t\ge 0$ define the departure counts and directed edge counts
\[
L_t(i):=\sum_{s=0}^{t-1}\mathbf 1\{X_s=i\},
\qquad
N_t(i,j):=\sum_{s=0}^{t-1}\mathbf 1\{X_s=i,\ X_{s+1}=j\},
\qquad i,j\in V.
\]
Please note that $L_t$ is slightly unconventional and only records the number of visits up to time $t-1$ instead of time $t$.
Write $L_t$ for the column vector $(L_t(i))_{i\in V}$ and $N_t$ for the matrix $(N_t(i,j))_{i,j\in V}$.

Row sums satisfy
\begin{equation}\label{eq:row_sum_identity}
N_t\mathbf 1=L_t,
\end{equation}
where $\mathbf 1$ is the all-ones column vector. Column sums satisfy, for each $j\in V$,
\begin{equation}\label{eq:col_flow_identity}
\sum_{i\in V}N_t(i,j)=L_t(j)-\mathbf 1\{X_0=j\}+\mathbf 1\{X_t=j\},
\end{equation}
equivalently,
\begin{equation}\label{eq:colsum_N}
N_t^\top\mathbf 1=L_t-\mathbf e_{X_0}+\mathbf e_{X_t},
\end{equation}
where $\mathbf e_x$ is the standard basis vector at $x$.

Define for $t\ge 0$ and $i,j\in V$ the edge discrepancy
\begin{equation}\label{eq:def_epsilon}
\epsilon_t(i,j):=N_t(i,j)-L_t(i)\,P_{ij}.
\end{equation}
In matrix form,
\begin{equation}\label{eq:N_decomp}
N_t=\diag(L_t)\,P+\epsilon_t.
\end{equation}
Each row of $\epsilon_t$ sums to zero:
\begin{equation}\label{eq:eps_rowsum0}
\epsilon_t\mathbf 1=\mathbf 0,
\end{equation}
since $\sum_j N_t(i,j)=L_t(i)$ and $\sum_j P_{ij}=1$.

\paragraph{The true self-avoiding walk variant.}
Fix $\lambda>0$. The (time-inhomogeneous) transition rule is
\begin{equation}\label{eq:Qt_def_generalP}
\Pbb(X_{t+1}=j\mid \mathcal F_t,\ X_t=i)
=
Q_t(i,j)
:=
\frac{P_{ij}\exp\{-\lambda\,\epsilon_t(i,j)\}}
{\sum_{k\in V}P_{ik}\exp\{-\lambda\,\epsilon_t(i,k)\}}.
\end{equation}
When $P_{ij}=0$ the numerator is $0$, so \eqref{eq:Qt_def_generalP} automatically restricts to the
support $\{j:P_{ij}>0\}$.

For $i\in V$ define the out-neighborhood and out-degree
\[
\mathcal N(i):=\{j\in V:\ P_{ij}>0\},\qquad d_i:=|\mathcal N(i)|.
\]
If $d_i=1$ then $\epsilon_t(i,\cdot)\equiv 0$ deterministically, so the nontrivial case is $d_i\ge2$.

\medskip

For $m\ge 0$ define the $m$-th departure time from row $i$ by
\[
\tau_m^{(i)}:=\inf\{t\ge 0:\ X_t=i,\ L_t(i)=m\},
\]
with the convention $\inf\emptyset=\infty$.

The first step is to isolate a single row \(i\) and observe the process only at successive departure times from that row. 
The resulting process is closely related to the finite-alphabet local TSAW studied earlier, allowing us to exploit the results proved there.

\medskip

\begin{lemma}[Row-wise reduction at actual departure times]\label{lem:row_urn}
Assume $d_i\ge 2$ and $p_{\min}(i):=\min_{j\in\mathcal N(i)}P_{ij}>0$.

For $m\ge 0$ define the \(m\)-th departure time from row \(i\) by
\[
\tau_m^{(i)}:=\inf\{t\ge 0:\ X_t=i,\ L_t(i)=m\},
\]
with the convention \(\inf\emptyset=\infty\). Thus, on the event \(\{\tau_m^{(i)}<\infty\}\), the chain
is at state \(i\) at time \(\tau_m^{(i)}\), and exactly \(m\) departures from \(i\) have occurred
before time \(\tau_m^{(i)}\).

On \(\{\tau_m^{(i)}<\infty\}\) define the destination of the next departure from \(i\) by
\[
Y_{m+1}^{(i)}:=X_{\tau_m^{(i)}+1}.
\]
For \(m\ge 0\) and \(j\in V\), define the embedded row-counts
\[
C_m^{(i)}(j):=\sum_{r=0}^{m-1}\mathbf 1\{\tau_r^{(i)}<\infty,\ Y_{r+1}^{(i)}=j\},
\qquad
\epsilon_m^{(i)}(j):=C_m^{(i)}(j)-mP_{ij},
\]
with the empty sum convention \(C_0^{(i)}(j)=0\).

For \(j\in\mathcal N(i)\), define further, for every \(m\ge0\),
\[
\bar c_i:=\frac{1}{d_i}\sum_{k\in\mathcal N(i)}\frac{1}{\lambda}\log P_{ik},
\qquad
\Delta_m^{(i)}(j):=\epsilon_m^{(i)}(j)-\frac{1}{\lambda}\log P_{ij}+\bar c_i.
\]

Then the following hold.

\begin{enumerate}
\item[\textnormal{(i)}] For every \(t\ge 0\) and \(j\in V\),
\begin{equation}\label{eq:time_change_counts}
N_t(i,j)=C_{L_t(i)}^{(i)}(j),
\qquad
\epsilon_t(i,j)=\epsilon_{L_t(i)}^{(i)}(j).
\end{equation}

\item[\textnormal{(ii)}] For every \(m\ge 0\), on the event \(\{\tau_m^{(i)}<\infty\}\),
\begin{equation}\label{eq:row_urn_rule}
\Pbb\left(Y_{m+1}^{(i)}=j \,\middle|\, \mathcal F_{\tau_m^{(i)}}\right)
=
\frac{P_{ij}\exp\{-\lambda\,\epsilon_m^{(i)}(j)\}}
{\sum_{k\in V}P_{ik}\exp\{-\lambda\,\epsilon_m^{(i)}(k)\}},
\qquad j\in\mathcal N(i).
\end{equation}

\item[\textnormal{(iii)}] For every \(m\ge0\) and every \(j\in\mathcal N(i)\),
\begin{equation}\label{eq:eps_delta_affine}
\epsilon_m^{(i)}(j)=\Delta_m^{(i)}(j)+\frac{1}{\lambda}\log P_{ij}-\bar c_i.
\end{equation}
Moreover, on \(\{\tau_m^{(i)}<\infty\}\),
\[
\sum_{j\in\mathcal N(i)}\Delta_m^{(i)}(j)=0,
\]
and on \(\{\tau_m^{(i)}<\infty\}\) the transition rule becomes
\begin{equation}\label{eq:row_softmax_reduced}
\Pbb\left(Y_{m+1}^{(i)}=j \,\middle|\, \mathcal F_{\tau_m^{(i)}}\right)
=
\frac{e^{-\lambda \Delta_m^{(i)}(j)}}{\sum_{k\in\mathcal N(i)}e^{-\lambda \Delta_m^{(i)}(k)}},
\qquad j\in\mathcal N(i),
\end{equation}
and
\[
\Delta_{m+1}^{(i)}(j)=\Delta_m^{(i)}(j)+\mathbf 1\{Y_{m+1}^{(i)}=j\}-P_{ij}
\qquad\text{on }\{\tau_m^{(i)}<\infty\}.
\]
\end{enumerate}
\end{lemma}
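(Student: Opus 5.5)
The plan is to prove the three parts in order. All three rest on bookkeeping of departures from $i$; no probabilistic estimate is needed beyond the definition \eqref{eq:Qt_def_generalP} and the strong Markov–type fact that the rule is stated conditionally on $\mathcal F_t$ for every deterministic $t$.

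For (i), fix $t$. The departures from $i$ occurring before time $t$ are exactly at times $\tau_0^{(i)}<\tau_1^{(i)}<\dots<\tau_{L_t(i)-1}^{(i)}$, all of which are $\le t-1$ and hence finite. Indeed, $L_s(i)$ increases by one precisely at steps $s$ with $X_s=i$, so $\tau_r^{(i)}$ is the time of the $(r+1)$-st visit to $i$. The transition $X_s=i,\ X_{s+1}=j$ with $s\le t-1$ therefore corresponds bijectively to an index $r\le L_t(i)-1$ with $Y_{r+1}^{(i)}=j$. Summing the indicators gives $N_t(i,j)=C^{(i)}_{L_t(i)}(j)$. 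Subtracting $L_t(i)P_{ij}$ from both sides gives the identity for $\epsilon$.

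For (ii), fix $m$ and decompose the event $\{\tau_m^{(i)}<\infty\}$ as the disjoint union over $t$ of $\{\tau_m^{(i)}=t\}$. Each such event lies in $\mathcal F_t$, since $\tau_m^{(i)}$ is a stopping time, and on it we have $X_t=i$ and $L_t(i)=m$. On that piece, $Y_{m+1}^{(i)}=X_{t+1}$, and by (i) $\epsilon_t(i,k)=\epsilon_m^{(i)}(k)$ for all $k$. Applying \eqref{eq:Qt_def_generalP} there, and then using the standard identity $\Pbb(A\mid\mathcal F_\tau)=\sum_t\mathbf 1\{\tau=t\}\Pbb(A\mid\mathcal F_t)$ for events $A\in\sigma(X_{\tau+1})$, yields \eqref{eq:row_urn_rule}. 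The one point requiring care is this last measure-theoretic identity. I would verify it directly: for $G\in\mathcal F_{\tau}$, the set $G\cap\{\tau=t\}$ lies in $\mathcal F_t$, and one sums over $t$.

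For (iii), \eqref{eq:eps_delta_affine} is a rearrangement of the definition of $\Delta$. Next we need $\sum_{j\in\mathcal N(i)}C_m^{(i)}(j)=m$ on $\{\tau_m^{(i)}<\infty\}$. This holds because all earlier $\tau_r^{(i)}$ are finite and each destination $Y_{r+1}^{(i)}$ lies in $\mathcal N(i)$ almost surely, by (ii). Hence $\sum_{j\in\mathcal N(i)}\epsilon_m^{(i)}(j)=0$. The terms $-\frac1\lambda\log P_{ij}+\bar c_i$ sum to zero by the definition of $\bar c_i$, so the $\Delta$'s sum to zero. Substituting \eqref{eq:eps_delta_affine} into \eqref{eq:row_urn_rule}, the numerator becomes
\[
P_{ij}e^{-\lambda\epsilon_m^{(i)}(j)}=P_{ij}\,e^{-\lambda\Delta_m^{(i)}(j)}\,P_{ij}^{-1}\,e^{\lambda\bar c_i}=e^{\lambda\bar c_i}e^{-\lambda\Delta_m^{(i)}(j)}.
\]
The common factor $e^{\lambda\bar c_i}$ cancels between numerator and denominator, and terms with $k\notin\mathcal N(i)$ vanish. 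This gives \eqref{eq:row_softmax_reduced}. Finally, the update rule for $\Delta$ follows because $C_{m+1}^{(i)}(j)-C_m^{(i)}(j)=\mathbf 1\{Y_{m+1}^{(i)}=j\}$ on $\{\tau_m^{(i)}<\infty\}$, while the affine shift is constant in $m$.

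The main obstacle, modest as it is, is the conditioning on the stopping-time $\sigma$-field in (ii). Everything else is algebraic.
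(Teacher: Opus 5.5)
Your proposal is correct and takes essentially the same route as the paper. Part (i) uses the bijection between departures from $i$ before time $t$ and indices $r\le L_t(i)-1$. Part (ii) splits $\{\tau_m^{(i)}<\infty\}$ into pieces $\{\tau_m^{(i)}=t\}\in\mathcal F_t$ and applies \eqref{eq:Qt_def_generalP} together with (i). Part (iii) is the cancellation of the common factor $e^{\lambda\bar c_i}$ plus the one-step count update. Your remark that $Y_{r+1}^{(i)}\in\mathcal N(i)$ holds only almost surely justifies the zero-sum identity slightly more carefully than the paper does, but it does not change the argument.
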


\begin{proof}
For \textnormal{(i)}, note that \(L_t(i)\) is exactly the number of departures from \(i\) before time
\(t\), and the \(r\)-th such departure, when it exists, occurs at time \(\tau_{r-1}^{(i)}\) and goes
to \(Y_r^{(i)}\). Therefore the number of departures from \(i\) to \(j\) before time \(t\) is precisely
the number of indices \(r\in\{1,\dots,L_t(i)\}\) for which \(Y_r^{(i)}=j\), which gives the first
identity in \eqref{eq:time_change_counts}. The second then follows immediately from the definitions of
\(\epsilon_t(i,j)\) and \(\epsilon_m^{(i)}(j)\).

For \textnormal{(ii)}, fix \(m\ge0\) and \(j\in\mathcal N(i)\). First, \(\tau_m^{(i)}\) is a stopping
time with respect to \((\mathcal F_t)\), because for each \(t\ge0\),
\[
\{\tau_m^{(i)}\le t\}
=
\bigcup_{r=0}^t \{X_r=i,\ L_r(i)=m\}\in\mathcal F_t.
\]
Now let \(A\in\mathcal F_{\tau_m^{(i)}}\). Then \(A\cap\{\tau_m^{(i)}=r\}\in\mathcal F_r\) for every
\(r\ge0\). Therefore
\begin{align*}
&\Ebb\!\left[
\mathbf 1_A\,\mathbf 1\{\tau_m^{(i)}<\infty,\ Y_{m+1}^{(i)}=j\}
\right] \\
&\qquad=
\sum_{r=0}^\infty
\Ebb\!\left[
\mathbf 1_{A\cap\{\tau_m^{(i)}=r\}}\mathbf 1\{X_{r+1}=j\}
\right] \\
&\qquad=
\sum_{r=0}^\infty
\Ebb\!\left[
\mathbf 1_{A\cap\{\tau_m^{(i)}=r\}}
\Ebb\!\left[\mathbf 1\{X_{r+1}=j\}\mid \mathcal F_r\right]
\right].
\end{align*}
By \eqref{eq:Qt_def_generalP},
\[
\Ebb\!\left[\mathbf 1\{X_{r+1}=j\}\mid \mathcal F_r\right]
=
Q_r(X_r,j).
\]
On the event \(\{\tau_m^{(i)}=r\}\), we have \(X_r=i\), and by \textnormal{(i)},
\[
\epsilon_r(i,\ell)=\epsilon_m^{(i)}(\ell)\qquad\text{for every }\ell\in V.
\]
Hence on \(\{\tau_m^{(i)}=r\}\),
\[
Q_r(X_r,j)
=
\frac{P_{ij}\exp\{-\lambda\,\epsilon_m^{(i)}(j)\}}
{\sum_{k\in V}P_{ik}\exp\{-\lambda\,\epsilon_m^{(i)}(k)\}}.
\]
Substituting this back yields
\[
\Ebb\!\left[
\mathbf 1_A\,\mathbf 1\{\tau_m^{(i)}<\infty,\ Y_{m+1}^{(i)}=j\}
\right]
=
\Ebb\!\left[
\mathbf 1_A\,\mathbf 1\{\tau_m^{(i)}<\infty\}
\frac{P_{ij}\exp\{-\lambda\,\epsilon_m^{(i)}(j)\}}
{\sum_{k\in V}P_{ik}\exp\{-\lambda\,\epsilon_m^{(i)}(k)\}}
\right],
\]
which proves \eqref{eq:row_urn_rule} on \(\{\tau_m^{(i)}<\infty\}\).

For \textnormal{(iii)}, identity \eqref{eq:eps_delta_affine} is immediate from the definition of
\(\Delta_m^{(i)}\). Now fix \(m\ge0\) and work on the event \(\{\tau_m^{(i)}<\infty\}\). Then the first
\(m\) departures from row \(i\) have all occurred, so for each \(r\in\{0,\dots,m-1\}\), the variable
\(Y_{r+1}^{(i)}\) is defined and takes exactly one value in \(\mathcal N(i)\). Therefore
\[
\sum_{j\in\mathcal N(i)} C_m^{(i)}(j)=m,
\qquad
\sum_{j\in\mathcal N(i)}\epsilon_m^{(i)}(j)=0.
\]
Since \(\sum_{j\in\mathcal N(i)}P_{ij}=1\), it follows that 
on the event \(\{\tau_{m-1}^{(i)}<\infty\}\), and hence on \(\{\tau_m^{(i)}<\infty\}\), we have
\[
\sum_{j\in\mathcal N(i)}\Delta_m^{(i)}(j)
=
\sum_{j\in\mathcal N(i)}
\left(\epsilon_m^{(i)}(j)-\frac{1}{\lambda}\log P_{ij}+\bar c_i\right)
=0.
\]
Also,
\[
e^{-\lambda \Delta_m^{(i)}(j)}
=
e^{-\lambda\bar c_i}\,P_{ij}\,e^{-\lambda\epsilon_m^{(i)}(j)},
\]
so the common factor \(e^{-\lambda\bar c_i}\) cancels in the softmax, giving
\eqref{eq:row_softmax_reduced}.
so multiplying the factor \(e^{-\lambda\bar c_i}\) to both the numerator and the denominator in \eqref{eq:row_urn_rule}, we arrive at \eqref{eq:row_softmax_reduced}.
Finally, on \(\{\tau_m^{(i)}<\infty\}\),
\[
C_{m+1}^{(i)}(j)=C_m^{(i)}(j)+\mathbf 1\{Y_{m+1}^{(i)}=j\},
\]
hence
\[
\epsilon_{m+1}^{(i)}(j)=\epsilon_m^{(i)}(j)+\mathbf 1\{Y_{m+1}^{(i)}=j\}-P_{ij},
\]
and therefore
\[
\Delta_{m+1}^{(i)}(j)=\Delta_m^{(i)}(j)+\mathbf 1\{Y_{m+1}^{(i)}=j\}-P_{ij}.
\]
Please note that on the event \(\{\tau_m^{(i)}<\infty\}\), the variable
\(Y_{m+1}^{(i)}\) is indeed well-defined: indeed, at time \(\tau_m^{(i)}\)
the walk is at \(i\) and exactly \(m\) departures from \(i\) have already
occurred, so the transition from \(\tau_m^{(i)}\) to \(\tau_m^{(i)}+1\)
is precisely the \((m+1)\)-st departure from \(i\).
\qedhere
\end{proof}

\medskip

Once the row-wise reduction has been identified, the finite-alphabet drift theorem, Theorem \ref{thm:exp_drift_M}, applies to the
embedded row process on the event that the next departure from row \(i\) is defined. This yields the
same exponential-drift inequality as in the local model, with indicators keeping track of the
possibility that the row process may terminate.

\medskip
\begin{corollary}[Row-wise exponential moment bound]\label{cor:row_urn_exp_moment}
Under the assumptions of Lemma~\ref{lem:row_urn}, for every \(\alpha>0\) there exist constants
\(\rho_i(\alpha)\in(0,1)\) and \(B_i(\alpha)<\infty\), depending only on
\(\alpha,\lambda\), and the row \(P_{i\cdot}\), such that, writing
\[
M_{\rm urn}^{(i)}(m):=\max_{j\in\mathcal N(i)}\Delta_m^{(i)}(j),
\qquad
V_{\rm urn}^{(i)}(m):=\exp\{\alpha M_{\rm urn}^{(i)}(m)\},
\]
one has, for every \(m\ge0\),
\begin{equation}\label{eq:urn_drift_row}
\Ebb\left[
V_{\rm urn}^{(i)}(m+1)\,\mathbf 1\{\tau_m^{(i)}<\infty\}
\,\middle|\, \mathcal F_{\tau_m^{(i)}}
\right]
\le
\rho_i(\alpha)\,V_{\rm urn}^{(i)}(m)\,\mathbf 1\{\tau_m^{(i)}<\infty\}
+
B_i(\alpha)\,\mathbf 1\{\tau_m^{(i)}<\infty\}.
\end{equation}
Moreover, with the convention \(\tau_{-1}^{(i)}:=0\),
\begin{equation}\label{eq:urn_unif_exp_row}
\sup_{m\ge 0}\ \Ebb\left[
\exp\Big\{\alpha\,\max_{j\in\mathcal N(i)}\Delta_m^{(i)}(j)\Big\}
\,\mathbf 1\{\tau_{m-1}^{(i)}<\infty\}
\right]<\infty.
\end{equation}
In fact one may take
\[
\rho_i(\alpha):=\exp\Big(-\frac{\alpha p_{\min}(i)}{2}\Big),
\qquad
c_i:=\frac{d_i}{d_i-1},
\]
\[
R_i(\alpha):=
\frac{1}{\lambda c_i}
\log\Bigg(\frac{d_i e^{\lambda}(e^{\alpha}-1)}{e^{\alpha p_{\min}(i)/2}-1}\Bigg),
\qquad
B_i(\alpha):=\exp\{\alpha(R_i(\alpha)+1-p_{\min}(i))\}.
\]
\end{corollary}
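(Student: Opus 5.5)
The plan is to transfer the proof of Theorem~\ref{thm:exp_drift_M} verbatim to the embedded row process, working throughout on the event $\{\tau_m^{(i)}<\infty\}$, which lies in $\mathcal F_{\tau_m^{(i)}}$. We take $d=d_i$, $p=(P_{ij})_{j\in\mathcal N(i)}$ with $p_{\min}=p_{\min}(i)$, and the centered deficits $\Delta_m^{(i)}(\cdot)$ of Lemma~\ref{lem:row_urn}(iii).

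The proof of Theorem~\ref{thm:exp_drift_M} used only three facts at a fixed time:
\begin{enumerate}
\item[(a)] the coordinates of $\Delta$ sum to zero;
\item[(b)] the conditional law of the next symbol is the softmax $e^{-\lambda\Delta_j}/\sum_k e^{-\lambda\Delta_k}$;
\item[(c)] the update is $\Delta_j\mapsto\Delta_j+\mathbf 1\{\text{choice}=j\}-p_j$.
\end{enumerate}
Lemma~\ref{lem:row_urn}(iii) supplies exactly these on $\{\tau_m^{(i)}<\infty\}$, with $\mathcal F_{\tau_m^{(i)}}$ in place of $\mathcal F_t$ and $Y_{m+1}^{(i)}$ in place of $X_{t+1}$.

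Accordingly, define $A_m:=\{j\in\mathcal N(i):\Delta_m^{(i)}(j)\ge M^{(i)}_{\rm urn}(m)-1\}$. Repeating \eqref{eq:min_leq_streamlined}--\eqref{eq:pA_streamlined} gives, on $\{\tau_m^{(i)}<\infty\}$,
\[
\Pbb\bigl(Y^{(i)}_{m+1}\in A_m\mid\mathcal F_{\tau_m^{(i)}}\bigr)\le d_ie^{\lambda}e^{-\lambda c_iM^{(i)}_{\rm urn}(m)}.
\]
The deterministic comparisons \eqref{eq:M_down_streamlined}--\eqref{eq:M_up_streamlined} also hold pathwise on that event. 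Multiplying by the $\mathcal F_{\tau_m^{(i)}}$-measurable indicator $\mathbf 1\{\tau_m^{(i)}<\infty\}$ and pulling it out of the conditional expectation, the two-regime argument ($M\ge R_i$ versus $M\le R_i$) yields \eqref{eq:urn_drift_row} with the stated constants. One point needs checking: $\Delta^{(i)}_{m+1}$ is defined everywhere by its formula, but it is only used under the indicator, so no issue arises off the event.

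For \eqref{eq:urn_unif_exp_row}, set $a_m:=\Ebb[V^{(i)}_{\rm urn}(m)\mathbf 1\{\tau_{m-1}^{(i)}<\infty\}]$. Since $\{\tau_m^{(i)}<\infty\}\subseteq\{\tau_{m-1}^{(i)}<\infty\}$ (the $m$-th departure time requires the $(m-1)$-th) and $V\ge0$, taking expectations in \eqref{eq:urn_drift_row} gives
\[
a_{m+1}=\Ebb\bigl[V^{(i)}_{\rm urn}(m+1)\mathbf 1\{\tau_m^{(i)}<\infty\}\bigr]\le\rho_i\,\Ebb\bigl[V^{(i)}_{\rm urn}(m)\mathbf 1\{\tau_m^{(i)}<\infty\}\bigr]+B_i\le\rho_i a_m+B_i.
\]
The base case $a_0=\exp\{\alpha\max_j(-\lambda^{-1}\log P_{ij}+\bar c_i)\}$ is a finite deterministic constant depending only on the row, because $\Delta_0^{(i)}(j)=-\lambda^{-1}\log P_{ij}+\bar c_i$ is not zero. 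By the same induction as in Theorem~\ref{thm:exp_drift_M}, $\sup_m a_m\le\max\{a_0,B_i/(1-\rho_i)\}<\infty$. This is exactly where the $\max$ in that constant matters, as the footnote there anticipated.

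The main obstacle is bookkeeping rather than new estimates. First, we must verify that $\{\tau_m^{(i)}<\infty\}$ is $\mathcal F_{\tau_m^{(i)}}$-measurable, so the indicator can be moved through the conditional expectation. Second, we must verify the nesting of the events $\{\tau_m^{(i)}<\infty\}$, so that the recursion closes with the shifted indicator $\mathbf 1\{\tau_{m-1}^{(i)}<\infty\}$. Both follow from $\tau_m^{(i)}$ being a stopping time, as shown in the proof of Lemma~\ref{lem:row_urn}(ii).
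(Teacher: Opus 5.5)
Your proposal is correct and follows essentially the same route as the paper. On $\{\tau_m^{(i)}<\infty\}\in\mathcal F_{\tau_m^{(i)}}$ you use Lemma~\ref{lem:row_urn}(iii) (zero-sum deficits, softmax law, additive update) to rerun the one-step drift of Theorem~\ref{thm:exp_drift_M} with $d=d_i$ and $p_{\min}=p_{\min}(i)$, and then close the recursion $a_{m+1}\le\rho_i a_m+B_i$ via the nesting $\{\tau_m^{(i)}<\infty\}\subseteq\{\tau_{m-1}^{(i)}<\infty\}$ and the nonzero deterministic value $a_0=e^{\alpha M_{\rm urn}^{(i)}(0)}$. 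One small quibble: in your last paragraph you attribute the nesting to the stopping-time property, but it actually follows from the counting definition of $\tau_m^{(i)}$, as you correctly said earlier.
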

Note that $\rho_i(\alpha) \in (0,1)$, $R_i(\alpha) > 0$ and $B_i(\alpha) > 1$. Also,
since we are making the assumptions of Lemma~\ref{lem:row_urn}, we are in a scenario where
$d_i \ge 2$, so $c_i$ is well-defined.

\begin{proof}
Fix \(m\ge0\). We emphasize that the row process here is defined using
directed transition counts. On the event \(\{\tau_m^{(i)}<\infty\}\), the
walk is at \(i\), exactly \(m\) departures from \(i\) have already occurred,
and the next transition from \(i\) increases exactly one of the directed
counts \(N_t(i,j)\), \(j\in\mathcal N(i)\). What happens after this departure,
including the path by which the walk may later return to \(i\), plays no role
in this one-step row update.

By Lemma~\ref{lem:row_urn}\textnormal{(iii)}, the vector
\[
\bigl(\Delta_m^{(i)}(j)\bigr)_{j\in\mathcal N(i)}
\]
has zero sum, and conditional on \(\mathcal F_{\tau_m^{(i)}}\) and on
\(\{\tau_m^{(i)}<\infty\}\), the law of the next destination
\(Y_{m+1}^{(i)}=X_{\tau_m^{(i)}+1}\) is
\[
\Pbb\left(Y_{m+1}^{(i)}=j \mid \mathcal F_{\tau_m^{(i)}}\right)
=
\frac{\exp\{-\lambda \Delta_m^{(i)}(j)\}}
{\sum_{k\in\mathcal N(i)}\exp\{-\lambda \Delta_m^{(i)}(k)\}},
\qquad j\in\mathcal N(i).
\]
This is precisely the finite-alphabet softmax rule \eqref{eq:rule}, with
alphabet \(\mathcal N(i)\) and target distribution \(p_j=P_{ij}\). Therefore
the one-step estimate from Theorem~\ref{thm:exp_drift_M} applies to this
embedded row process, yielding \eqref{eq:urn_drift_row}.

Now define
\[
a_m:=\Ebb\left[V_{\rm urn}^{(i)}(m)\,\mathbf 1\{\tau_{m-1}^{(i)}<\infty\}\right],
\qquad m\ge0,
\]
with the convention \(\tau_{-1}^{(i)}:=0\). Since
\[
\Delta_0^{(i)}(j)=-\frac{1}{\lambda}\log P_{ij}+\bar c_i,
\]
we have \(a_0=e^{\alpha M_{\rm urn}^{(i)}(0)}<\infty\). Taking expectations in
\eqref{eq:urn_drift_row} gives
\[
a_{m+1}
=
\Ebb\left[V_{\rm urn}^{(i)}(m+1)\,\mathbf 1\{\tau_m^{(i)}<\infty\}\right]
\le
\rho_i(\alpha)\,
\Ebb\left[V_{\rm urn}^{(i)}(m)\,\mathbf 1\{\tau_m^{(i)}<\infty\}\right]
+
B_i(\alpha)\,\Pbb(\tau_m^{(i)}<\infty).
\]
Since \(\{\tau_m^{(i)}<\infty\}\subseteq\{\tau_{m-1}^{(i)}<\infty\}\), it follows that
\[
a_{m+1}\le \rho_i(\alpha)a_m+B_i(\alpha).
\]
Hence
\[
a_m\le
\max\left\{e^{\alpha M_{\rm urn}^{(i)}(0)},\ \frac{B_i(\alpha)}{1-\rho_i(\alpha)}\right\}
\qquad\text{for all }m\ge0,
\]
which is exactly \eqref{eq:urn_unif_exp_row}.
\end{proof}

\medskip

The next step is to pass from urn time along a single row to a deterministic global time \(t\). The
idea is to condition on the realized number of departures from each row up to time \(t\), invoke the
row-wise exponential moment bound, and then take a union bound over the finitely many rows. 

\medskip
\begin{lemma}[Fixed-time tail bound for the global edge discrepancy]\label{lem:global_time_tail}
Assume that for every \(i\in V\), either \(d_i=1\) or \(d_i\ge 2\) with
\(p_{\min}(i):=\min_{j\in\mathcal N(i)}P_{ij}>0\). 
If \(d_{\max}:=\max_{i\in V}d_i=1\), then
\(M(t)\equiv 0\) for all \(t\). Otherwise define
\[
p_{\min}:=\min_{i\in V}\min_{j\in\mathcal N(i)}P_{ij}\in(0,1).
\]
Then there exist finite constants
\[
C_0:=\log\Big(1+\frac{2}{p_{\min}}\Big),
\qquad
C_1:=
1-p_{\min}
+\frac{1}{\lambda}
\left(
\log d_{\max}+\lambda+\log 2+2\log\frac{1}{p_{\min}}
\right),
\]
such that, for every \(\alpha\ge1\), the quantity
\[
H_\alpha:=\exp\left(\frac{\alpha^2}{\lambda}+C_1\alpha+C_0\right)
\]
results in the bound
\[
\Pbb(M(t)\ge x)\le |V|\,(t+1)\,H_\alpha\,e^{-\alpha x}
\qquad\text{for all }t\ge0\text{ and }x\ge0.
\]
\end{lemma}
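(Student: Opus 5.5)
Throughout I take $M(t):=\max_{i\in V}\max_{j\in\mathcal N(i)}\epsilon_t(i,j)$, the maximal global edge discrepancy. The plan is to reduce $M(t)$ to the row-wise urn quantities of Lemma~\ref{lem:row_urn}, and then apply the uniform exponential moment \eqref{eq:urn_unif_exp_row} with explicit constants. A union bound over rows and over the random departure count $L_t(i)\in\{0,\dots,t\}$ gives the factor $|V|(t+1)$. Rows with $d_i=1$ have $\epsilon_t(i,\cdot)\equiv 0$ and contribute nothing when $x>0$. For $x=0$ the bound is trivial once $H_\alpha\ge1$, which holds since $C_0,C_1>0$.

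\textbf{Step 1 (time change).} Fix a row $i$ with $d_i\ge2$. By Lemma~\ref{lem:row_urn}(i), $\epsilon_t(i,j)=\epsilon^{(i)}_{L_t(i)}(j)$. Moreover, on $\{L_t(i)=m\}$ with $m\ge1$, the $m$-th departure has already happened, so $\tau^{(i)}_{m-1}<\infty$; for $m=0$ the convention $\tau_{-1}^{(i)}=0$ covers this case. By \eqref{eq:eps_delta_affine} and $\bar c_i\ge \frac1\lambda\log p_{\min}$,
\[
\epsilon^{(i)}_m(j)\le \Delta^{(i)}_m(j)+\tfrac1\lambda\log\tfrac1{p_{\min}} .
\]
Hence
\[
\Pbb\Bigl(\max_j\epsilon_t(i,j)\ge x\Bigr)\le\sum_{m=0}^{t}\Pbb\Bigl(M^{(i)}_{\rm urn}(m)\ge x-\tfrac1\lambda\log\tfrac1{p_{\min}},\ \tau^{(i)}_{m-1}<\infty\Bigr).
\]
Markov's inequality bounds each summand by $e^{-\alpha x}e^{\frac\alpha\lambda\log(1/p_{\min})}\,a_m$, where $a_m$ is the quantity bounded in Corollary~\ref{cor:row_urn_exp_moment}. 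Summing over $m\le t$ and then over $i\in V$ produces the factor $|V|(t+1)$.

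\textbf{Step 2 (explicit constants).} It remains to show
\[
e^{\frac\alpha\lambda\log(1/p_{\min})}\max\Bigl\{e^{\alpha M^{(i)}_{\rm urn}(0)},\ \tfrac{B_i(\alpha)}{1-\rho_i(\alpha)}\Bigr\}\le H_\alpha
\]
for $\alpha\ge1$.

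\emph{The ratio $1/(1-\rho_i)$.} Using $1/(1-e^{-u})\le 1+1/u$ with $u=\alpha p_{\min}(i)/2\ge p_{\min}/2$ gives $1/(1-\rho_i)\le 1+2/p_{\min}=e^{C_0}$.

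\emph{The radius $R_i(\alpha)$.} Use $c_i\ge1$, $e^\alpha-1\le e^\alpha$, and $e^{\alpha p_{\min}(i)/2}-1\ge \alpha p_{\min}/2\ge p_{\min}/2$. These give
\[
R_i(\alpha)\le \tfrac1\lambda\bigl(\log d_{\max}+\lambda+\alpha+\log2+\log\tfrac1{p_{\min}}\bigr).
\]

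\emph{Assembling.} Consequently
\[
\alpha(R_i+1-p_{\min}(i))\le \tfrac{\alpha^2}{\lambda}+\alpha\Bigl(1-p_{\min}+\tfrac1\lambda(\log d_{\max}+\lambda+\log2+\log\tfrac1{p_{\min}})\Bigr).
\]
Here I need $1-p_{\min}(i)\le 1-p_{\min}$, which holds since $p_{\min}\le p_{\min}(i)$. Adding the shift exponent $\frac\alpha\lambda\log\frac1{p_{\min}}$ accounts for the second $\log\frac1{p_{\min}}$ in $C_1$. So the $B_i/(1-\rho_i)$ branch is at most $e^{\alpha^2/\lambda+C_1\alpha+C_0}$.

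\emph{The initial term.} Since $\Delta^{(i)}_0(j)=-\frac1\lambda\log P_{ij}+\bar c_i\le\frac1\lambda\log\frac1{p_{\min}}$ (as $\bar c_i\le0$), the initial term $e^{\alpha M^{(i)}_{\rm urn}(0)}$ is dominated by the same expression.

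\textbf{Main obstacle.} The conceptual step is the random-index issue: $L_t(i)$ is not a stopping index for the urn, so Corollary~\ref{cor:row_urn_exp_moment} cannot be applied at $m=L_t(i)$ directly. I resolve this by the crude union bound over all $m\le t$, which is exactly why the bound carries the factor $t+1$. The only delicate bookkeeping is Step 2: ensuring that the bounds on $R_i$ and $\rho_i$ are uniform in $i$ and valid for all $\alpha\ge1$, which is where the hypothesis $\alpha\ge1$ enters.
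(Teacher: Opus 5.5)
Your proposal is correct and follows the paper's proof step for step: the time change via Lemma~\ref{lem:row_urn}(i), the affine shift bounded by $\frac1\lambda\log\frac1{p_{\min}}$, Markov's inequality against the uniform moment of Corollary~\ref{cor:row_urn_exp_moment} using $\{L_t(i)=m\}\subseteq\{\tau^{(i)}_{m-1}<\infty\}$, the union bound over $m\le t$ and over rows, and the same explicit bounds on $R_i(\alpha)$ and $1/(1-\rho_i(\alpha))$. The only difference is cosmetic: you absorb the $m=0$ term through the convention $\tau^{(i)}_{-1}=0$, whereas the paper observes directly that $M_i(t)=0$ on $\{L_t(i)=0\}$, and both work.
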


\begin{proof}
If \(d_{\max}=1\), then the graph is just a single edge and the walk reduces to a random walk.
We then have \(\epsilon_t(i,\cdot)\equiv0\)
for every \(i\), and hence \(M(t)\equiv0\). Thus suppose \(d_{\max}\ge2\).

Fix \(i\in V\). If \(d_i=1\), define \(M_i(t):=0\) for all \(t\). If \(d_i\ge2\), define
\[
M_i(t):=\max_{j\in\mathcal N(i)}\epsilon_t(i,j).
\]
Then
\[
M(t)=\max_{i\in V}M_i(t).
\]

Fix now a row \(i\) with \(d_i\ge2\). By Lemma~\ref{lem:row_urn}\textnormal{(i)},
\[
M_i(t)=\max_{j\in\mathcal N(i)}\epsilon_t(i,j)
=
\max_{j\in\mathcal N(i)}\epsilon_{L_t(i)}^{(i)}(j).
\]
By Lemma~\ref{lem:row_urn}\textnormal{(iii)}, for every \(m\ge0\) and \(j\in\mathcal N(i)\),
\[
\epsilon_m^{(i)}(j)=\Delta_m^{(i)}(j)+\frac{1}{\lambda}\log P_{ij}-\bar c_i.
\]
Define
\[
K_i:=\max_{j\in\mathcal N(i)}
\left(\frac{1}{\lambda}\log P_{ij}-\bar c_i\right).
\]
Then for every \(m\ge0\),
\[
\max_{j\in\mathcal N(i)}\epsilon_m^{(i)}(j)
\le
\max_{j\in\mathcal N(i)}\Delta_m^{(i)}(j)+K_i.
\]
Therefore, by Corollary~\ref{cor:row_urn_exp_moment},
\begin{equation}\label{eq:eps_exp_row_bound}
\sup_{m\ge0}
\Ebb\left[
\exp\Big\{\alpha\max_{j\in\mathcal N(i)}\epsilon_m^{(i)}(j)\Big\}
\,\mathbf 1\{\tau_{m-1}^{(i)}<\infty\}
\right]
\le
e^{\alpha K_i}
\sup_{m\ge0}
\Ebb\left[
\exp\Big\{\alpha\max_{j\in\mathcal N(i)}\Delta_m^{(i)}(j)\Big\}
\,\mathbf 1\{\tau_{m-1}^{(i)}<\infty\}
\right].
\end{equation}

From Corollary~\ref{cor:row_urn_exp_moment},
\[
\sup_{m\ge0}
\Ebb\left[
\exp\Big\{\alpha\max_{j\in\mathcal N(i)}\Delta_m^{(i)}(j)\Big\}
\,\mathbf 1\{\tau_{m-1}^{(i)}<\infty\}
\right]
\le
\max\left\{e^{\alpha M_{\rm urn}^{(i)}(0)},\ \frac{B_i(\alpha)}{1-\rho_i(\alpha)}\right\},
\]
where
\[
\rho_i(\alpha):=\exp\Big(-\frac{\alpha p_{\min}(i)}{2}\Big),
\qquad
R_i(\alpha):=
\frac{1}{\lambda c_i}
\log\Bigg(\frac{d_i e^{\lambda}(e^{\alpha}-1)}{e^{\alpha p_{\min}(i)/2}-1}\Bigg),
\qquad
B_i(\alpha):=\exp\{\alpha(R_i(\alpha)+1-p_{\min}(i))\}.
\]

Because \(P_{ij}\ge p_{\min}\) on every support edge, every number
\(\lambda^{-1}\log P_{ij}\) lies in the interval
\([-\lambda^{-1}\log(1/p_{\min}),\,0]\). Since \(\bar c_i\) is the average of these numbers,
\[
K_i\le \frac{1}{\lambda}\log\frac1{p_{\min}},
\qquad
M_{\rm urn}^{(i)}(0)
=
\max_{j\in\mathcal N(i)}
\left(-\frac{1}{\lambda}\log P_{ij}+\bar c_i\right)
\le
\frac{1}{\lambda}\log\frac1{p_{\min}}.
\]
Also, since \(d_i\le d_{\max}\), \(p_{\min}(i)\ge p_{\min}\), \(c_i\ge1\), and \(\alpha\ge1\),
\[
R_i(\alpha)\le
\frac{1}{\lambda}
\log\Bigg(\frac{d_{\max}e^{\lambda}(e^{\alpha}-1)}{e^{\alpha p_{\min}/2}-1}\Bigg)
\le
\frac{\alpha}{\lambda}
+
\frac{1}{\lambda}
\left(
\log d_{\max}+\lambda+\log 2+\log\frac1{p_{\min}}
\right),
\]
where we used \(e^\alpha-1\le e^\alpha\) and \(e^u-1\ge u\) for \(u>0\). Moreover,
\[
\frac{1}{1-\rho_i(\alpha)}
=
\frac{1}{1-e^{-\alpha p_{\min}(i)/2}}
\le
1+\frac{2}{\alpha p_{\min}}
\le
1+\frac{2}{p_{\min}},
\]
since \(\alpha\ge1\). Combining these estimates yields
\[
e^{\alpha K_i}
\max\left\{e^{\alpha M_{\rm urn}^{(i)}(0)},\ \frac{B_i(\alpha)}{1-\rho_i(\alpha)}\right\}
\le
\exp\left(\frac{\alpha^2}{\lambda}+C_1\alpha+C_0\right)
=
H_\alpha.
\]
Together with \eqref{eq:eps_exp_row_bound}, this gives
\begin{equation}\label{eq:uniform_row_eps_bound}
\sup_{m\ge0}
\Ebb\left[
\exp\Big\{\alpha\max_{j\in\mathcal N(i)}\epsilon_m^{(i)}(j)\Big\}
\,\mathbf 1\{\tau_{m-1}^{(i)}<\infty\}
\right]
\le H_\alpha
\end{equation}
for every row \(i\) with \(d_i\ge2\).

Now fix \(t\ge0\), \(x\ge0\), and a row \(i\) with \(d_i\ge2\). Since \(L_t(i)\in\{0,1,\dots,t\}\),
\[
\{M_i(t)\ge x\}
=
\bigcup_{m=0}^t
\left(
\{L_t(i)=m\}\cap
\left\{\max_{j\in\mathcal N(i)}\epsilon_m^{(i)}(j)\ge x\right\}
\right),
\]
and these events are disjoint. Therefore
\[
\Pbb(M_i(t)\ge x)
=
\sum_{m=0}^t
\Pbb\Bigl(
L_t(i)=m,\ \max_{j\in\mathcal N(i)}\epsilon_m^{(i)}(j)\ge x
\Bigr).
\]

If \(m=0\), then \(\{L_t(i)=0\}\) implies that no departure from \(i\) has occurred before time \(t\),
hence \(N_t(i,j)=0\) for all \(j\), so \(\epsilon_t(i,j)=0\) for all \(j\), and therefore
\(M_i(t)=0\). Thus the \(m=0\) term is zero for \(x>0\), and is trivially bounded by
\(H_\alpha e^{-\alpha x}\) for \(x=0\).

If \(m\ge1\), then
\[
\{L_t(i)=m\}\subseteq\{\tau_{m-1}^{(i)}<\infty\},
\]
because having \(m\) departures from \(i\) before time \(t\) implies that the \((m-1)\)-st departure
time from \(i\) exists. Hence, by Markov's inequality and \eqref{eq:uniform_row_eps_bound},
\begin{align*}
&\Pbb\Bigl(
L_t(i)=m,\ \max_{j\in\mathcal N(i)}\epsilon_m^{(i)}(j)\ge x
\Bigr) \\
&\qquad\le
e^{-\alpha x}
\Ebb\left[
\exp\Big\{\alpha\max_{j\in\mathcal N(i)}\epsilon_m^{(i)}(j)\Big\}
\,\mathbf 1\{L_t(i)=m\}
\right] \\
&\qquad\le
e^{-\alpha x}
\Ebb\left[
\exp\Big\{\alpha\max_{j\in\mathcal N(i)}\epsilon_m^{(i)}(j)\Big\}
\,\mathbf 1\{\tau_{m-1}^{(i)}<\infty\}
\right] \\
&\qquad\le
H_\alpha e^{-\alpha x}.
\end{align*}
Summing over \(m=0,\dots,t\), we obtain
\[
\Pbb(M_i(t)\ge x)\le (t+1)H_\alpha e^{-\alpha x}
\qquad\text{for every row }i\text{ with }d_i\ge2.
\]

For rows with \(d_i=1\), we have \(M_i(t)\equiv0\), so the same bound is trivial. Therefore, for all
\(i\in V\),
\[
\Pbb(M_i(t)\ge x)\le (t+1)H_\alpha e^{-\alpha x}.
\]
Finally, since \(M(t)=\max_{i\in V}M_i(t)\), a union bound over \(i\in V\) gives
\[
\Pbb(M(t)\ge x)\le |V|\,(t+1)\,H_\alpha\,e^{-\alpha x},
\]
as claimed.
\end{proof}

\medskip

The almost-sure envelope is now obtained by choosing a time-dependent exponential parameter
\(\alpha_t\), applying the fixed-time tail bound with target error probability \(\eta_t\), and then
using Borel--Cantelli.

\medskip
\begin{corollary}
\label{cor:sqrt_log}
Assume the hypotheses of Lemma~\ref{lem:global_time_tail}. If \(d_{\max}:=\max_{i\in V}d_i=1\), then
\(M(t)\equiv 0\) for all \(t\). Otherwise define
\[
p_{\min}:=\min_{i:\,d_i\ge 2}\min_{j\in\mathcal N(i)}P_{ij}\in(0,1],
\qquad
d_{\max}:=\max_{i\in V} d_i.
\]
Fix \(\delta>0\) and set
\[
\eta_t:=(t+2)^{-(1+\delta)},
\qquad
\Lambda_t:=\log\frac{|V|\,(t+1)}{\eta_t}=\log\big(|V|\,(t+1)(t+2)^{1+\delta}\big),
\qquad
\alpha_t:=\max\{1,\sqrt{\lambda \Lambda_t}\}.
\]
Then, almost surely, for all sufficiently large \(t\),
\begin{equation}\label{eq:Mt_as_sqrtlog_selfcontained}
M(t)\le C_1+2\sqrt{\frac{\Lambda_t}{\lambda}}+\frac{C_0}{\alpha_t},
\end{equation}
where \(C_0\) and \(C_1\) are the constants from Lemma~\ref{lem:global_time_tail}.
\end{corollary}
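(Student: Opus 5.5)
The plan is to apply Lemma~\ref{lem:global_time_tail} at each time $t$ with the time-dependent parameter $\alpha=\alpha_t\ge 1$ and a threshold $x_t$ chosen so that the right-hand side equals $\eta_t$. Then $\sum_t\eta_t<\infty$, and Borel--Cantelli gives the almost-sure statement. The case $d_{\max}=1$ is trivial, since then $M(t)\equiv 0$. So assume $d_{\max}\ge2$.

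\textbf{Choice of threshold.} Fix $t$ and solve $|V|(t+1)H_{\alpha_t}e^{-\alpha_t x}=\eta_t$ for $x$. Taking logarithms and using the definition of $H_\alpha$ together with $\Lambda_t=\log(|V|(t+1)/\eta_t)$ gives
\[
x_t:=\frac{\Lambda_t}{\alpha_t}+\frac{\alpha_t}{\lambda}+C_1+\frac{C_0}{\alpha_t}.
\]
Lemma~\ref{lem:global_time_tail} then yields $\Pbb(M(t)\ge x_t)\le \eta_t=(t+2)^{-(1+\delta)}$. This is summable in $t$, so by Borel--Cantelli, almost surely $M(t)<x_t$ for all sufficiently large $t$.

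\textbf{Simplifying $x_t$.} It remains to bound $x_t$ by the right-hand side of \eqref{eq:Mt_as_sqrtlog_selfcontained}. Since $\Lambda_t\to\infty$, we have $\lambda\Lambda_t\ge1$ for all large $t$, and then $\alpha_t=\sqrt{\lambda\Lambda_t}$. In that regime
\[
\frac{\Lambda_t}{\alpha_t}=\sqrt{\frac{\Lambda_t}{\lambda}},
\qquad
\frac{\alpha_t}{\lambda}=\sqrt{\frac{\Lambda_t}{\lambda}},
\]
so $x_t=C_1+2\sqrt{\Lambda_t/\lambda}+C_0/\alpha_t$. This is exactly the claimed envelope, and it holds for all $t$ beyond a deterministic threshold. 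Intersecting this with the Borel--Cantelli event finishes the proof.

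\textbf{Main subtlety.} Lemma~\ref{lem:global_time_tail} is stated for each fixed $\alpha\ge1$ with constants independent of $\alpha$ and $t$. We must confirm that letting $\alpha$ depend on $t$ is legitimate. It is, because the lemma is applied separately at each fixed $t$. We also note that the $p_{\min}$ defined here, a minimum over rows with $d_i\ge2$, is the quantity for which the lemma's constants were derived; rows with $d_i=1$ contribute $M_i\equiv0$. The remaining work is bookkeeping: the strict-versus-weak inequality at the threshold is harmless, and $\Lambda_t=\Theta(\log t)$ makes the bound $O(\sqrt{\log t})$.
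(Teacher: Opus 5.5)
Your proposal is correct and is essentially the paper's proof: apply Lemma~\ref{lem:global_time_tail} with $\alpha=\alpha_t$, take $x_t=\alpha_t/\lambda+C_1+C_0/\alpha_t+\Lambda_t/\alpha_t$ so that $\Pbb(M(t)\ge x_t)\le\eta_t$, invoke Borel--Cantelli, and use $\alpha_t=\sqrt{\lambda\Lambda_t}$ for large $t$ to collapse $\alpha_t/\lambda+\Lambda_t/\alpha_t$ into $2\sqrt{\Lambda_t/\lambda}$. Your side remark on $p_{\min}$ is also fine: when $d_{\max}\ge 2$, the lemma's minimum over all rows coincides with the minimum over rows with $d_i\ge 2$, because rows with $d_i=1$ have their single support entry equal to $1$.
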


\begin{proof}
If \(d_{\max}=1\), then \(M(t)\equiv 0\), so there is nothing to prove. Assume \(d_{\max}\ge2\).
Apply Lemma~\ref{lem:global_time_tail} with \(\alpha=\alpha_t\ge1\). Since
\(H_{\alpha_t}=\exp(\alpha_t^2/\lambda+C_1\alpha_t+C_0)\), we obtain
\[
\Pbb(M(t)\ge x)\le
|V|\,(t+1)\exp\left(\frac{\alpha_t^2}{\lambda}+C_1\alpha_t+C_0-\alpha_t x\right).
\]
Now set
\[
x_t:=\frac{\alpha_t}{\lambda}+C_1+\frac{C_0}{\alpha_t}+\frac{\Lambda_t}{\alpha_t}.
\]
Because \(\Lambda_t=\log\big(|V|\,(t+1)/\eta_t\big)\), the preceding display becomes
\(
\Pbb(M(t)\ge x_t)\le \eta_t.
\)
Since \(\sum_{t\ge0}\eta_t<\infty\), Borel--Cantelli implies that almost surely the event
\(\{M(t)\ge x_t\}\) occurs only finitely often. Hence almost surely, for all sufficiently large \(t\),
\(
M(t)\le x_t.
\)

Finally, for all sufficiently large \(t\) we have \(\alpha_t=\sqrt{\lambda \Lambda_t}\), so
\[
\frac{\alpha_t}{\lambda}+\frac{\Lambda_t}{\alpha_t}=2\sqrt{\frac{\Lambda_t}{\lambda}}.
\]
Substituting this into the definition of \(x_t\) yields
\eqref{eq:Mt_as_sqrtlog_selfcontained}.
\end{proof}

\subsection{Vertex and edge discrepancy bounds}

The previous subsection gives an almost-sure bound on the edge discrepancy
\[
M(t):=\max_{i\in V}\max_{j\in\mathcal N(i)}\epsilon_t(i,j).
\]
We now convert this bound into control of the vertex discrepancy \(L_t-t\pi\). The key point is that
the column-flow identity for \(N_t\) yields a Poisson equation whose forcing term is determined by
\(\epsilon_t\), and hence by \(M(t)\).

\medskip
\begin{lemma}[Poisson equation and deterministic control of the forcing term]
\label{lem:poisson_eps}\label{lem:bt_vs_M}
For each \(t\ge0\), define
\[
b_t:=\epsilon_t^\top\mathbf 1+\mathbf e_{X_0}-\mathbf e_{X_t}.
\]
Then
\begin{equation}\label{eq:poisson_eps}
(I-P^\top)L_t=b_t,
\end{equation}
and moreover \(\mathbf 1^\top b_t=0\). If \(d_{\max}:=\max_{i\in V}d_i\), then for every
\(i,j\in V\),
\[
|\epsilon_t(i,j)|\le (d_{\max}-1)M(t),
\]
and consequently
\[
\|b_t\|_\infty\le |V|\,(d_{\max}-1)\,M(t)+1.
\]
\end{lemma}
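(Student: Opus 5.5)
The plan is to derive the Poisson equation from the column-flow identity and the decomposition of $N_t$, and then to bound the entries of $\epsilon_t$ using the row-sum-zero property.

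\textbf{Poisson equation.} Transpose \eqref{eq:N_decomp} to get $N_t^\top=P^\top\diag(L_t)+\epsilon_t^\top$. Applying this to $\mathbf 1$ gives $N_t^\top\mathbf 1=P^\top L_t+\epsilon_t^\top\mathbf 1$, since $\diag(L_t)\mathbf 1=L_t$. Equating this with \eqref{eq:colsum_N} yields
\[
L_t-\mathbf e_{X_0}+\mathbf e_{X_t}=P^\top L_t+\epsilon_t^\top\mathbf 1 ,
\]
which rearranges to $(I-P^\top)L_t=b_t$. For the zero-sum claim, note that $\mathbf 1^\top\epsilon_t^\top\mathbf 1=(\epsilon_t\mathbf 1)^\top\mathbf 1=0$ by \eqref{eq:eps_rowsum0}, and $\mathbf 1^\top(\mathbf e_{X_0}-\mathbf e_{X_t})=0$. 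Alternatively, $\mathbf 1^\top(I-P^\top)=((I-P)\mathbf 1)^\top=0$.

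\textbf{Entrywise bound.} Fix a row $i$ and recall $M(t)=\max_i\max_{j\in\mathcal N(i)}\epsilon_t(i,j)$ (with rows of degree one contributing $0$).
\begin{itemize}[label={}]
\item If $j\notin\mathcal N(i)$, then $N_t(i,j)=0$ and $P_{ij}=0$, so $\epsilon_t(i,j)=0$.
\item If $d_i=1$, then $\epsilon_t(i,\cdot)\equiv0$.
\item If $d_i\ge2$ and $j\in\mathcal N(i)$, the upper bound $\epsilon_t(i,j)\le M(t)$ holds by definition. For the lower bound, the row sums to zero over $\mathcal N(i)$, so
\[
-\epsilon_t(i,j)=\sum_{k\in\mathcal N(i),k\ne j}\epsilon_t(i,k)\le (d_i-1)M(t)\le (d_{\max}-1)M(t).
\]
\end{itemize}
We also need $M(t)\ge0$, which holds because a zero-sum row has a nonnegative maximum. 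Since $d_{\max}-1\ge1$ whenever $d_{\max}\ge2$, combining the two sides gives $|\epsilon_t(i,j)|\le(d_{\max}-1)M(t)$. In the degenerate case $d_{\max}=1$, every $\epsilon_t$ vanishes, so the bound still holds.

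\textbf{Bound on $b_t$.} The $j$-th entry of $b_t$ is
\[
(b_t)_j=\sum_i\epsilon_t(i,j)+\mathbf 1\{X_0=j\}-\mathbf 1\{X_t=j\}.
\]
The sum over $i$ is at most $|V|(d_{\max}-1)M(t)$ in absolute value. The two indicators differ by at most $1$ in absolute value. Together these give $\|b_t\|_\infty\le|V|(d_{\max}-1)M(t)+1$.

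The argument is essentially routine. The only point needing care is the lower bound on $\epsilon_t(i,j)$, which has to come from the zero row sum, together with checking the degenerate rows and the sign $M(t)\ge0$.
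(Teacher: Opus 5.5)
Your proposal is correct and follows the paper's proof essentially step for step. The Poisson equation comes from combining \eqref{eq:colsum_N} with \eqref{eq:N_decomp}, and $\mathbf 1^\top b_t=0$ from \eqref{eq:eps_rowsum0}. The lower bound on $\epsilon_t(i,j)$ comes from the zero row sum over $\mathcal N(i)$, and the bound on $b_t$ is an entrywise triangle inequality. Your explicit checks that $M(t)\ge 0$ and $d_{\max}-1\ge 1$ are needed to get the upper side $\epsilon_t(i,j)\le M(t)\le (d_{\max}-1)M(t)$, a step the paper leaves implicit.
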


\begin{proof}
By \eqref{eq:colsum_N}, we have \(N_t^\top\mathbf 1=L_t-\mathbf e_{X_0}+\mathbf e_{X_t}\). Using
\eqref{eq:N_decomp}, this becomes
\[
L_t-\mathbf e_{X_0}+\mathbf e_{X_t}
=
N_t^\top\mathbf 1
=
(\diag(L_t)P+\epsilon_t)^\top\mathbf 1
=
P^\top L_t+\epsilon_t^\top\mathbf 1.
\]
Rearranging gives \eqref{eq:poisson_eps}.

Next, \(\mathbf 1^\top b_t=\mathbf 1^\top\epsilon_t^\top\mathbf 1
+\mathbf 1^\top(\mathbf e_{X_0}-\mathbf e_{X_t})=0\), because
\(\epsilon_t\mathbf 1=\mathbf 0\) by \eqref{eq:eps_rowsum0} and
\(\mathbf 1^\top\mathbf e_{X_0}=\mathbf 1^\top\mathbf e_{X_t}=1\).

To bound \(\epsilon_t\), fix \(i\in V\). If \(d_i=1\), then \(\epsilon_t(i,\cdot)\equiv0\), so there is
nothing to prove. Assume \(d_i\ge2\). Since the \(i\)-th row of \(\epsilon_t\) is supported on
\(\mathcal N(i)\) and sums to zero, if
\[
M_i(t):=\max_{j\in\mathcal N(i)}\epsilon_t(i,j),
\]
then
\[
\min_{j\in\mathcal N(i)}\epsilon_t(i,j)\ge -(d_i-1)M_i(t).
\]
Indeed, if some entry were smaller than \(-(d_i-1)M_i(t)\), then the sum of the row would be strictly
negative. Hence, for every \(j\in\mathcal N(i)\),
\[
|\epsilon_t(i,j)|\le (d_i-1)M_i(t)\le (d_{\max}-1)M(t).
\]
If \(j\notin\mathcal N(i)\), then \(P_{ij}=0\) and \(N_t(i,j)=0\), so \(\epsilon_t(i,j)=0\). Thus
\[
|\epsilon_t(i,j)|\le (d_{\max}-1)M(t)
\qquad\text{for all }i,j\in V.
\]

Finally, for each \(j\in V\),
\[
|b_t(j)|
=
\Big|\sum_{i\in V}\epsilon_t(i,j)+\mathbf 1\{X_0=j\}-\mathbf 1\{X_t=j\}\Big|
\le
\sum_{i\in V}|\epsilon_t(i,j)|+1
\le
|V|\,(d_{\max}-1)\,M(t)+1.
\]
Taking the maximum over \(j\) gives the stated bound on \(\|b_t\|_\infty\).
\end{proof}

\medskip

The linear system \((I-P^\top)x=b\) is singular because \(\pi\) lies in the kernel of \(I-P^\top\).
The correct inverse object is therefore the \emph{group inverse}, which acts on the codimension-one
subspace \(\{b:\mathbf 1^\top b=0\}\).

\medskip
Let \(\pi\) denote the unique stationary distribution of the irreducible kernel \(P\), and define
\[
\mathcal L:=I-P^\top,
\qquad
\Pi:=\pi\,\mathbf 1^\top.
\]

\begin{lemma}[Group inverse and normalized solution of the Poisson equation]
\label{lem:vertex_rep}
The matrix \(\mathcal L+\Pi\) is invertible. Define
\begin{equation}\label{eq:group_inverse_def}
\mathcal L^\# := (\mathcal L+\Pi)^{-1}-\Pi.
\end{equation}
This matrix is the group inverse of \(\mathcal L\), in the sense that
\[
\mathcal L\mathcal L^\#=\mathcal L^\#\mathcal L=I-\Pi,
\qquad
\mathcal L^\#\Pi=\Pi\mathcal L^\#=0.
\]
Moreover, for every \(b\in\mathbb R^V\) satisfying \(\mathbf 1^\top b=0\), the equation
\[
\mathcal Lx=b
\]
has a unique solution satisfying \(\mathbf 1^\top x=0\), and this solution is given by
\[
x=\mathcal L^\# b.
\]
\end{lemma}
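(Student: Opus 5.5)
The plan is to prove the statement by pure linear algebra, using only irreducibility of \(P\). The first step is to record three basic relations. Stationarity gives \(P^\top\pi=\pi\), so \(\mathcal L\pi=0\) and \(\mathcal L\Pi=0\). Row-stochasticity gives \(\mathbf 1^\top P^\top=\mathbf 1^\top\), so \(\mathbf 1^\top\mathcal L=0\) and \(\Pi\mathcal L=\pi\mathbf 1^\top\mathcal L=0\). The normalization \(\mathbf 1^\top\pi=1\) gives \(\Pi^2=\Pi\).

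Irreducibility enters through Perron--Frobenius: the eigenvalue \(1\) of \(P^\top\) is simple, so \(\ker\mathcal L=\mathrm{span}(\pi)\).

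To show that \(\mathcal L+\Pi\) is invertible, suppose \((\mathcal L+\Pi)x=0\). Multiplying on the left by \(\mathbf 1^\top\) gives \(0+\mathbf 1^\top\pi\,\mathbf 1^\top x=\mathbf 1^\top x=0\). Then \(\Pi x=0\), so \(\mathcal Lx=0\), hence \(x=c\pi\). Finally \(\mathbf 1^\top x=c=0\), so \(x=0\).

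The identities for the group inverse come next. Write \(A=\mathcal L+\Pi\). Since \(\mathcal L\Pi=\Pi\mathcal L=0\) and \(\Pi^2=\Pi\), we get \(A\Pi=\Pi=\Pi A\). Multiplying by \(A^{-1}\) on either side yields \(A^{-1}\Pi=\Pi=\Pi A^{-1}\).

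With this, \(\mathcal L^\#\Pi=A^{-1}\Pi-\Pi^2=\Pi-\Pi=0\), and symmetrically \(\Pi\mathcal L^\#=0\).

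For the main identity, expand \(\mathcal L\mathcal L^\#=(A-\Pi)(A^{-1}-\Pi)\):
\[
\mathcal L\mathcal L^\#=I-A\Pi-\Pi A^{-1}+\Pi^2=I-\Pi-\Pi+\Pi=I-\Pi.
\]
The same expansion in the other order gives \(\mathcal L^\#\mathcal L=I-\Pi\).

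For the Poisson equation, take \(b\) with \(\mathbf 1^\top b=0\) and set \(x=\mathcal L^\#b\).

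Existence: \(\mathcal Lx=(I-\Pi)b=b-\pi(\mathbf 1^\top b)=b\).

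Normalization: \(\mathbf 1^\top x=\mathbf 1^\top\Pi\mathcal L^\#b=0\), using \(\mathbf 1^\top\Pi=\mathbf 1^\top\) and \(\Pi\mathcal L^\#=0\).

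Uniqueness: two normalized solutions differ by an element of \(\ker\mathcal L=\mathrm{span}(\pi)\), say \(c\pi\), with \(\mathbf 1^\top(c\pi)=0\). This forces \(c=0\).

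The only nontrivial input is that the kernel of \(\mathcal L\) is one-dimensional. I will cite Perron--Frobenius for the irreducible stochastic matrix \(P\): the right eigenvector \(\mathbf 1\) of \(P\) is simple, hence so is the left eigenvector \(\pi\).
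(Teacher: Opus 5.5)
Your proof is correct and follows essentially the same route as the paper. It uses the same $\mathbf 1^\top$-multiplication and $\ker\mathcal L=\mathrm{span}\{\pi\}$ argument for invertibility, the same identities $(\mathcal L+\Pi)^{-1}\Pi=\Pi=\Pi(\mathcal L+\Pi)^{-1}$ for the group-inverse relations, and the same existence/uniqueness argument; the only differences are cosmetic (you expand $(A-\Pi)(A^{-1}-\Pi)$ directly, and you obtain $\mathbf 1^\top x=0$ from $\mathbf 1^\top\Pi=\mathbf 1^\top$ rather than from positivity of $\pi$).
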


\begin{proof}
We first show that \(\mathcal L+\Pi\) is invertible. Suppose \((\mathcal L+\Pi)x=0\). Left-multiplying
by \(\mathbf 1^\top\) gives
\[
0=\mathbf 1^\top(\mathcal L+\Pi)x
=\underbrace{\mathbf 1^\top\mathcal L}_{=\,0}x+\mathbf 1^\top\Pi x
=\mathbf 1^\top x,
\]
because \(\mathbf 1^\top\Pi x=\mathbf 1^\top(\pi\,\mathbf 1^\top x)=\mathbf 1^\top x\). Hence
\(\mathbf 1^\top x=0\), so \(\Pi x=0\), and therefore \(\mathcal Lx=0\). Since \(P\) is irreducible on
the finite set \(V\), \(\ker(\mathcal L)=\mathrm{span}\{\pi\}\). Thus \(x=c\pi\) for some scalar \(c\),
and then \(0=\mathbf 1^\top x=c\,\mathbf 1^\top\pi=c\), so \(x=0\). Hence \(\mathcal L+\Pi\) is
invertible.

Now define \(\mathcal L^\#\) by \eqref{eq:group_inverse_def}. Because \(\mathcal L\Pi=0\) and
\(\Pi\mathcal L=0\), we have
\[
(\mathcal L+\Pi)\Pi=\Pi
\qquad\text{and}\qquad
\Pi(\mathcal L+\Pi)=\Pi.
\]
Multiplying by \((\mathcal L+\Pi)^{-1}\) gives
\[
(\mathcal L+\Pi)^{-1}\Pi=\Pi
\qquad\text{and}\qquad
\Pi(\mathcal L+\Pi)^{-1}=\Pi.
\]
Therefore
\[
\mathcal L^\#\Pi=((\mathcal L+\Pi)^{-1}-\Pi)\Pi=0,
\qquad
\Pi\mathcal L^\#=\Pi((\mathcal L+\Pi)^{-1}-\Pi)=0.
\]
Also,
\[
(\mathcal L+\Pi)\mathcal L^\#
=
(\mathcal L+\Pi)\big((\mathcal L+\Pi)^{-1}-\Pi\big)
=
I-\Pi,
\]
and since \(\Pi\mathcal L^\#=0\), this implies \(\mathcal L\mathcal L^\#=I-\Pi\). Similarly,
\[
\mathcal L^\#(\mathcal L+\Pi)
=
\big((\mathcal L+\Pi)^{-1}-\Pi\big)(\mathcal L+\Pi)
=
I-\Pi,
\]
and because \(\mathcal L^\#\Pi=0\), it follows that \(\mathcal L^\#\mathcal L=I-\Pi\).

Finally, let \(b\in\mathbb R^V\) satisfy \(\mathbf 1^\top b=0\). Then \(\Pi b=\pi(\mathbf 1^\top b)=0\).
Set \(x:=\mathcal L^\# b\). Using \(\mathcal L\mathcal L^\#=I-\Pi\), we obtain
\[
\mathcal Lx=\mathcal L\mathcal L^\# b=(I-\Pi)b=b.
\]
Using \(\Pi\mathcal L^\#=0\), we also have
\[
\Pi x=\Pi\mathcal L^\# b=0.
\]
Since \(\Pi x=\pi(\mathbf 1^\top x)\) and \(\pi\) has strictly positive entries, this implies
\(\mathbf 1^\top x=0\).

It remains to prove uniqueness. Suppose \(x\) and \(y\) both satisfy \(\mathcal Lx=\mathcal Ly=b\) and
\(\mathbf 1^\top x=\mathbf 1^\top y=0\). Then \(z:=x-y\) satisfies \(\mathcal Lz=0\), so
\(z\in\mathrm{span}\{\pi\}\). Thus \(z=c\pi\) for some scalar \(c\), and
\(0=\mathbf 1^\top z=c\,\mathbf 1^\top\pi=c\). Hence \(z=0\), so \(x=y\).
\end{proof}

\medskip

We now apply the previous two lemmas with \(b=b_t\) and \(x=D_t:=L_t-t\pi\). The almost-sure
\(\sqrt{\log t}\) envelope for \(M(t)\) then
immediately yields a corresponding bound on the vertex discrepancy.

\medskip
\begin{corollary}[Vertex and edge discrepancy bounds; estimator error]
\label{cor:LLN_vertex_edge}
Assume:
\begin{enumerate}
\item \(P\) is row-stochastic on finite \(V\) and the adaptive rule \eqref{eq:Qt_def_generalP} holds;
\item for each \(i\in V\), either \(d_i=1\) or \(d_i\ge 2\) with
\(p_{\min}(i):=\min_{j\in\mathcal N(i)}P_{ij}>0\);
\item \(P\) is irreducible.
\end{enumerate}
Let \(\pi\) be the stationary distribution of \(P\), define
\[
D_t:=L_t-t\pi,
\]
and fix \(\delta>0\). Set
\[
\Lambda_t:=\log\big(|V|\,(t+1)(t+2)^{1+\delta}\big),
\qquad
\alpha_t:=\max\{1,\sqrt{\lambda\Lambda_t}\}.
\]
Then, almost surely, for all sufficiently large \(t\),
\begin{enumerate}
\item[\textnormal{(i)}] for every \(i\in V\),
\begin{equation}\label{eq:vertex_discrepancy_main}
|L_t(i)-t\pi_i|
\le
\|\mathcal L^\#\|_{\infty\to\infty}
\bigg(
|V|(d_{\max}-1)\bigg(C_1+2\sqrt{\frac{\Lambda_t}{\lambda}}+\frac{C_0}{\alpha_t}\bigg)+1
\bigg);
\end{equation}

\item[\textnormal{(ii)}] for every \((i,j)\) with \(P_{ij}>0\),
\begin{equation}
\begin{aligned}\label{eq:edge_discrepancy_main}
|N_t(i,j)-t\pi_iP_{ij}|
\le &
P_{ij}\,
\|\mathcal L^\#\|_{\infty\to\infty}
\bigg(
|V|(d_{\max}-1)\bigg(C_1+2\sqrt{\frac{\Lambda_t}{\lambda}}+\frac{C_0}{\alpha_t}\bigg)+1
\bigg) \\
& +
(d_{\max}-1)\bigg(C_1+2\sqrt{\frac{\Lambda_t}{\lambda}}+\frac{C_0}{\alpha_t}\bigg);
\end{aligned}
\end{equation}

\item[\textnormal{(iii)}] for every bounded function \(f:V\to\mathbb R\),
\begin{equation}\label{eq:estimator_error_main}
\left|
\frac1t\sum_{s=0}^{t-1} f(X_s)-\sum_{i\in V}\pi_i f(i)
\right|
\le
\frac{\|f\|_\infty\,|V|}{t}\,
\|\mathcal L^\#\|_{\infty\to\infty}
\bigg(
|V|(d_{\max}-1)\bigg(C_1+2\sqrt{\frac{\Lambda_t}{\lambda}}+\frac{C_0}{\alpha_t}\bigg)+1
\bigg).
\end{equation}
\end{enumerate}
In particular,
\[
L_t(i)-t\pi_i=O(\sqrt{\log t}),
\qquad
N_t(i,j)-t\pi_iP_{ij}=O(\sqrt{\log t}),
\]
almost surely for every \(i\in V\) and every \((i,j)\) with \(P_{ij}>0\), and for every bounded
\(f:V\to\mathbb R\),
\[
\left|
\frac1t\sum_{s=0}^{t-1} f(X_s)-\sum_{i\in V}\pi_i f(i)
\right|
=
O\bigg(\frac{\sqrt{\log t}}{t}\bigg)
\qquad\text{almost surely}.
\]
\end{corollary}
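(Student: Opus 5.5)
The proof will be assembled from Lemma~\ref{lem:poisson_eps}, Lemma~\ref{lem:vertex_rep} and Corollary~\ref{cor:sqrt_log}. Write $E_t:=C_1+2\sqrt{\Lambda_t/\lambda}+C_0/\alpha_t$. By Corollary~\ref{cor:sqrt_log}, there is an almost-sure event $\Omega_0$ on which $M(t)\le E_t$ for all $t\ge t_0(\omega)$. I will work on $\Omega_0$ throughout, so every bound below holds simultaneously for all sufficiently large $t$. Since $M(t)\ge 0$ always (each row of $\epsilon_t$ sums to zero, so its maximum entry is nonnegative), the bound on $M(t)$ is also a bound on $|M(t)|$.

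\emph{Vertex discrepancy (i).} Set $D_t=L_t-t\pi$. Because $\mathcal L\pi=0$, Lemma~\ref{lem:poisson_eps} gives $\mathcal L D_t=\mathcal L L_t=b_t$ with $\mathbf 1^\top b_t=0$. Moreover $\mathbf 1^\top D_t=t-t=0$, since $\sum_i L_t(i)=t$ (each of the times $0,\dots,t-1$ is counted once). By the uniqueness part of Lemma~\ref{lem:vertex_rep}, $D_t=\mathcal L^\# b_t$. Hence
\[
\|D_t\|_\infty\le\|\mathcal L^\#\|_{\infty\to\infty}\|b_t\|_\infty\le\|\mathcal L^\#\|_{\infty\to\infty}\bigl(|V|(d_{\max}-1)M(t)+1\bigr),
\]
and inserting $M(t)\le E_t$ gives \eqref{eq:vertex_discrepancy_main}. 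The one point to check is the case $d_{\max}=1$: then $M\equiv 0$ and the bound reads $\|\mathcal L^\#\|_{\infty\to\infty}$, which is consistent with the formula.

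\emph{Edge discrepancy (ii).} Use the decomposition \eqref{eq:N_decomp}:
\[
N_t(i,j)-t\pi_iP_{ij}=P_{ij}\bigl(L_t(i)-t\pi_i\bigr)+\epsilon_t(i,j).
\]
The first term is at most $P_{ij}$ times the bound in (i). For the second, Lemma~\ref{lem:poisson_eps} gives $|\epsilon_t(i,j)|\le(d_{\max}-1)M(t)\le(d_{\max}-1)E_t$. Adding the two yields \eqref{eq:edge_discrepancy_main}.

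\emph{Estimator error (iii).} By the identity from the introduction,
\[
\frac1t\sum_{s<t}f(X_s)-\pi(f)=\frac1t\sum_i f(i)D_t(i).
\]
Bounding each term by $\|f\|_\infty\|D_t\|_\infty$ and summing over the $|V|$ states gives \eqref{eq:estimator_error_main}.

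\emph{Asymptotic statements.} We have $\Lambda_t=(2+\delta)\log t+O(1)$, so $E_t=O(\sqrt{\log t})$, while $\|\mathcal L^\#\|_{\infty\to\infty}$, $|V|$, $d_{\max}$ and $P_{ij}$ do not depend on $t$. This gives the $O(\sqrt{\log t})$ rates for $L_t(i)-t\pi_i$ and $N_t(i,j)-t\pi_iP_{ij}$, and the $O(\sqrt{\log t}/t)$ rate for the estimator error.

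All the probabilistic work was done in Corollary~\ref{cor:sqrt_log}; the rest is deterministic linear algebra. The only delicate step is justifying $D_t=\mathcal L^\# b_t$, which requires the normalization $\mathbf 1^\top D_t=0$; this holds because of the convention that $L_t$ counts the times $0,\dots,t-1$.
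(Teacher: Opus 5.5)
Your proposal is correct and follows the paper's proof essentially step for step. You take the almost-sure envelope $M(t)\le C_1+2\sqrt{\Lambda_t/\lambda}+C_0/\alpha_t$ from Corollary~\ref{cor:sqrt_log}, combine the Poisson equation $\mathcal L D_t=b_t$ with $\mathbf 1^\top D_t=0$ (from $\sum_i L_t(i)=t$) to get $D_t=\mathcal L^\# b_t$ via Lemma~\ref{lem:vertex_rep}, split $N_t(i,j)-t\pi_iP_{ij}=P_{ij}D_t(i)+\epsilon_t(i,j)$ using $|\epsilon_t(i,j)|\le(d_{\max}-1)M(t)$, and sum over states for the estimator error, exactly as the paper does on a single probability-one event.
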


\begin{proof}
By Corollary~\ref{cor:sqrt_log}, almost surely, for all sufficiently large \(t\),
\[
M(t)\le C_1+2\sqrt{\frac{\Lambda_t}{\lambda}}+\frac{C_0}{\alpha_t}.
\]
This is exactly the bound recorded in \eqref{eq:vertex_discrepancy_main} after applying the Poisson
equation estimates below.

Next, Lemma~\ref{lem:poisson_eps} gives
\[
(I-P^\top)L_t=b_t,
\]
while \((I-P^\top)(t\pi)=0\) because \(P^\top\pi=\pi\). Hence, with \(D_t:=L_t-t\pi\),
\[
\mathcal L D_t=(I-P^\top)(L_t-t\pi)=b_t.
\]
Also,
\[
\mathbf 1^\top D_t=\mathbf 1^\top L_t-t\,\mathbf 1^\top\pi=t-t=0.
\]
Therefore Lemma~\ref{lem:vertex_rep} yields
\[
D_t=\mathcal L^\# b_t.
\]
Taking \(\ell^\infty\)-norms and using Lemma~\ref{lem:poisson_eps}, we obtain
\[
\|D_t\|_\infty
\le
\|\mathcal L^\#\|_{\infty\to\infty}\,\|b_t\|_\infty
\le
\|\mathcal L^\#\|_{\infty\to\infty}\bigl(|V|(d_{\max}-1)M(t)+1\bigr).
\]
Substituting the bound on \(M(t)\) from Corollary~\ref{cor:sqrt_log} gives
\eqref{eq:vertex_discrepancy_main}.

For the edge discrepancy, use
\[
N_t(i,j)=L_t(i)P_{ij}+\epsilon_t(i,j),
\]
so
\[
N_t(i,j)-t\pi_iP_{ij}
=
P_{ij}\bigl(L_t(i)-t\pi_i\bigr)+\epsilon_t(i,j).
\]
Hence
\[
|N_t(i,j)-t\pi_iP_{ij}|
\le
P_{ij}\,|L_t(i)-t\pi_i|+|\epsilon_t(i,j)|.
\]
By Lemma~\ref{lem:poisson_eps}, we have
\[
|\epsilon_t(i,j)|\le (d_{\max}-1)M(t),
\]
so \eqref{eq:edge_discrepancy_main} follows from \eqref{eq:vertex_discrepancy_main} and the bound
from Corollary~\ref{cor:sqrt_log}.

Finally,
\[
\frac1t\sum_{s=0}^{t-1} f(X_s)-\sum_{i\in V}\pi_i f(i)
=
\frac1t\sum_{i\in V} f(i)\bigl(L_t(i)-t\pi_i\bigr),
\]
and therefore
\[
\left|
\frac1t\sum_{s=0}^{t-1} f(X_s)-\sum_{i\in V}\pi_i f(i)
\right|
\le
\frac{\|f\|_\infty}{t}\sum_{i\in V}|L_t(i)-t\pi_i|
\le
\frac{\|f\|_\infty\,|V|}{t}\,\|D_t\|_\infty.
\]
Combining this with \eqref{eq:vertex_discrepancy_main} yields
\eqref{eq:estimator_error_main}.

The asymptotic statements are immediate from the fact that
\[
\Lambda_t=\log\big(|V|\,(t+1)(t+2)^{1+\delta}\big)=O(\log t)
\]
and
\[
\alpha_t=\max\{1,\sqrt{\lambda\Lambda_t}\}= O(\sqrt{\log t}).
\]
\end{proof}

It is then a natural consequence of this result that this variant of TSAW on the graph is indeed recurrent.

\begin{corollary}[Recurrence of states and directed edges]
\label{cor:recurrence}
Under the assumptions of Corollary~\ref{cor:LLN_vertex_edge}, almost surely every state
\(i\in V\) is visited infinitely often. Moreover, every directed edge \((i,j)\) with
\(P_{ij}>0\) is traversed infinitely often.
\end{corollary}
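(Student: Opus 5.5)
The plan is to read recurrence off Corollary~\ref{cor:LLN_vertex_edge}. That corollary holds on a single almost-sure event $\Omega_0$. On $\Omega_0$ there is a random time $T(\omega)$ such that for all $t\ge T$ we have $|L_t(i)-t\pi_i|\le g(t)$ for every $i\in V$. We also have $|N_t(i,j)-t\pi_iP_{ij}|\le h(t)$ for every edge with $P_{ij}>0$. Here $g,h$ are the explicit deterministic envelopes in \eqref{eq:vertex_discrepancy_main} and \eqref{eq:edge_discrepancy_main}, both $O(\sqrt{\log t})$. Since the statement quantifies over the finitely many states and edges, one event $\Omega_0$ handles all of them at once, and no further union bound over a countable family is needed.

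First, for vertices: irreducibility of $P$ on a finite set gives $\pi_i>0$ for every $i$, which was already used in Lemma~\ref{lem:vertex_rep}. Hence on $\Omega_0$,
\[
L_t(i)\ \ge\ t\pi_i-g(t)\ \longrightarrow\ \infty \qquad (t\to\infty),
\]
because $g(t)=O(\sqrt{\log t})=o(t)$. Now $L_t(i)$ is nondecreasing, and each increment corresponds to a visit to $i$ at a time $s\le t-1$. So $L_t(i)\to\infty$ means exactly that $i$ is visited infinitely often.

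Second, for edges: for $(i,j)$ with $P_{ij}>0$ we have $\pi_iP_{ij}>0$. The same reasoning gives
\[
N_t(i,j)\ \ge\ t\pi_iP_{ij}-h(t)\ \to\ \infty,
\]
and $N_t(i,j)$ counts the traversals of $(i,j)$ before time $t$, so the edge is traversed infinitely often.

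There is no real obstacle here. The only points to check are that $\pi_i>0$ (standard for irreducible finite chains), that the envelopes are $o(t)$, and that the almost-sure statement of Corollary~\ref{cor:LLN_vertex_edge} is simultaneous over all $i$ and $(i,j)$. The last point holds because Corollary~\ref{cor:sqrt_log} bounds $M(t)$, a maximum over all rows and edges, on one event, and every other bound is deterministic given $M(t)$.
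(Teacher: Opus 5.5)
Your proposal is correct and matches the paper's proof: both work on the single probability-one event from Corollary~\ref{cor:LLN_vertex_edge}, on which the vertex and edge bounds hold simultaneously for all states and support edges. Combined with $\pi_i>0$, $\pi_iP_{ij}>0$ and the $O(\sqrt{\log t})=o(t)$ envelopes, this gives $L_t(i)\to\infty$ and $N_t(i,j)\to\infty$.
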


\begin{proof}
Let \(\delta>0\) be as in Corollary~\ref{cor:LLN_vertex_edge}. Since \(P\) is irreducible
on the finite state space \(V\), its stationary distribution \(\pi\) is strictly positive:
\(\pi_i>0\) for every \(i\in V\).

By Corollary~\ref{cor:LLN_vertex_edge}, there is an event $\Omega_0$ of probability one on
which, for all sufficiently large \(t\), the bounds \eqref{eq:vertex_discrepancy_main} and
\eqref{eq:edge_discrepancy_main} hold simultaneously for every \(i\in V\) and every support
edge \((i,j)\) with \(P_{ij}>0\). Work on this event $\Omega_0$.

Let
\[
A_t:=
\|\mathcal L^\#\|_{\infty\to\infty}
\bigg(
|V|(d_{\max}-1)\bigg(C_1+2\sqrt{\frac{\Lambda_t}{\lambda}}+\frac{C_0}{\alpha_t}\bigg)+1
\bigg).
\]
Then \eqref{eq:vertex_discrepancy_main} gives \(|L_t(i)-t\pi_i|\le A_t\) for all
sufficiently large \(t\) and every \(i\in V\). Since
\(\Lambda_t=\log\big(|V|(t+1)(t+2)^{1+\delta}\big)=O(\log t)\) and
\(\alpha_t=\max\{1,\sqrt{\lambda\Lambda_t}\}\), we have \(A_t=O(\sqrt{\log t})=o(t)\).
Hence, for every \(i\in V\),
\(L_t(i)\ge t\pi_i-A_t\).
Because \(\pi_i>0\) and \(A_t=o(t)\), the right-hand side tends to \(+\infty\), and therefore
\(L_t(i)\to\infty\) for every \(i\in V\). Thus every state is visited infinitely often.

Next fix a directed edge \((i,j)\) with \(P_{ij}>0\). Define
\[
B_t(i,j):=
P_{ij}A_t
+
(d_{\max}-1)\bigg(C_1+2\sqrt{\frac{\Lambda_t}{\lambda}}+\frac{C_0}{\alpha_t}\bigg).
\]
By \eqref{eq:edge_discrepancy_main}, \(|N_t(i,j)-t\pi_iP_{ij}|\le B_t(i,j)\) for all
sufficiently large \(t\). Again \(B_t(i,j)=O(\sqrt{\log t})=o(t)\). Since \(\pi_i>0\) and
\(P_{ij}>0\), we have \(\pi_iP_{ij}>0\), and therefore
\(N_t(i,j)\ge t\pi_iP_{ij}-B_t(i,j)\to+\infty\).
Thus every directed edge \((i,j)\) with \(P_{ij}>0\) is traversed infinitely often.
\end{proof}

\section{Conclusion}

We studied a class of TSAW-based, history-dependent sampling dynamics on finite state spaces and showed that self-avoidance can be used to improve empirical integral estimation in Markov chain Monte Carlo in a quantitative finite-time sense. Starting from a finite-alphabet softmax-balancing process, we proved uniform exponential-moment bounds for the maximal excess, and then lifted this local mechanism to a non-Markovian random walk driven by a finite irreducible kernel \(P\), in which directed edges are penalized according to empirical overuse relative to the target flow \(L_t(i)P_{ij}\). Through a row-wise urn reduction, an almost-sure \(\sqrt{\log t}\) bound for the global edge discrepancy, and a Poisson-equation representation for the occupation error, we obtained the almost-sure bounds
\(
L_t(i)-t\pi_i = O(\sqrt{\log t}),
N_t(i,j)-t\pi_iP_{ij}=O(\sqrt{\log t}),
\)
for every state \(i\in V\) and every edge \((i,j)\) with \(P_{ij}>0\), and consequently
\(
|\frac1t\sum_{s=0}^{t-1} f(X_s)-\sum_{i\in V}\pi_i f(i)|
=
O(\sqrt{\log t} / t)
\)
almost surely for every bounded observable \(f:V\to\mathbb R\). Beyond these specific bounds, the construction also illuminates a broader design space of non-Markovian random walks: once a self-avoiding local process is designed and analyzed quantitatively, one can lift it to a trajectory-dependent walk on a graph or finite state space and propagate the resulting local control to global occupation and flow discrepancies. In this sense, the paper provides both a rigorous TSAW-based mechanism for improving empirical estimation and a general design paradigm for building non-Markovian random walks from self-avoiding local balancing rules.

\appendix

\section{Proofs for Lemma \ref{lem:one_leaf_birth}.}
\label{apx:one_leaf_birth}

\begin{proof}
Write $S_r:=\sum_{m=0}^{r-1}E_m$, 
with $S_0 := 0$,
where the $E_m$ are independent with
$E_m\sim\mathrm{Exp}(\rho^m)$, so that $N(s)\ge r$ if and only if $S_r\le s$.

To prove \textnormal{(i)}, fix $k\ge0$ and set $r:=i_*(s)+k$. 
Since $s \ge 1$, we have $i_*(s) \ge 1$ and so $r \ge 1$.
For any $\theta>0$, Chernoff's bound gives
\[
\mathbb P(N(s)\ge r)=\mathbb P(S_r\le s)\le e^{\theta s}\prod_{m=0}^{r-1}\frac{\rho^m}{\rho^m+\theta}.
\]
Choose $\theta:=\rho^{\,i_*(s)-1}$. Then $\theta s\in[1,1/\rho)$, hence $e^{\theta s}\le e^{1/\rho}=C_1(\rho)$. Also, for every $m$ we have $\rho^m/(\rho^m+\theta)\le1$, while for $m\ge i_*(s)-1$,
\[
\frac{\rho^m}{\rho^m+\theta}\le \frac{\rho^m}{\theta}=\rho^{\,m-(i_*(s)-1)}.
\]
Therefore
\[
\mathbb P(N(s)\ge r)\le
C_1(\rho)\prod_{m=i_*(s)-1}^{r-1}\rho^{\,m-(i_*(s)-1)}
=
C_1(\rho)\,\rho^{\sum_{\ell=0}^k \ell}
=
C_1(\rho)\,\rho^{k(k+1)/2},
\]
which proves \textnormal{(i)}.

For \textnormal{(ii)}, first suppose $1\le k\le i_*(s)-1$ and set $r:=i_*(s)-k\ge1$. For any
$\theta\in(0,\rho^{r-1})$, Chernoff's bound yields
\[
\mathbb P(N(s)\le r)=\mathbb P(S_r\ge s)\le e^{-\theta s}\prod_{m=0}^{r-1}\frac{\rho^m}{\rho^m-\theta}.
\]
Take $\theta:=\tfrac12\rho^{r-1}$. Then
\[
\prod_{m=0}^{r-1}\frac{\rho^m}{\rho^m-\theta}
=
\prod_{m=0}^{r-1}\frac{1}{1-\frac12\rho^{r-1-m}}
\le
\prod_{m=0}^{\infty}\frac{1}{1-\frac12\rho^m}
=
C_2(\rho).\footnote{This actually is $(\frac12;\rho)_\infty$, where $(a;q)_\infty:=\prod_{m=0}^\infty (1-aq^m)$ denotes the $q$-Pochhammer symbol.}
\]
Moreover, since $\rho^{\,i_*(s)-1}s\in[1,1/\rho)$, we have
\[
\theta s=\frac12\rho^{r-1}s=\frac12\rho^{\,i_*(s)-1-k}s\in\left[\frac12\rho^{-k},\,\frac12\rho^{-k-1}\right),
\]
so \(e^{-\theta s}\le \exp(-\tfrac12\rho^{-k})\le \exp(-\tfrac12\rho^{-k+1})\). Hence
\[
\mathbb P\bigl(N(s)\le i_*(s)-k\bigr)\le
C_2(\rho)\exp\bigl(-\tfrac12\rho^{-k+1}\bigr)
\qquad\text{for }1\le k\le i_*(s)-1.
\]

If \(k=i_*(s)\), then \(N(s)\le i_*(s)-k\) means \(N(s)=0\), so
\(\mathbb P(N(s)\le i_*(s)-k)=\mathbb P(E_0>s)=e^{-s}\). Since \(s\ge \rho^{-i_*(s)+1}\), we get
\(e^{-s}\le e^{-\rho^{-i_*(s)+1}}\le C_2(\rho)\exp(-\tfrac12\rho^{-i_*(s)+1})\), because
\(C_2(\rho)\ge1\). If \(k>i_*(s)\), the event is empty. This proves \textnormal{(ii)}.

For \textnormal{(iii)}, write $i_*:=i_*(s)$. Since $N(s)$ is a nonnegative integer-valued random
variable, we have the pointwise identity
\[
N(s)=\sum_{r=1}^{\infty}\mathbf 1\{N(s)\ge r\}.
\]
Applying Tonelli's theorem yields\footnote{Here we use Tonelli's theorem in the form
\[
\mathbb E\left[\sum_{r=1}^\infty X_r\right]
=
\sum_{r=1}^\infty \mathbb E[X_r]
\]
for any sequence of nonnegative random variables $(X_r)_{r\ge1}$, where both sides are allowed to take the value $+\infty$. Applying this with $X_r=\mathbf 1\{N(s)\ge r\}$ gives
\[
\mathbb E[N(s)]
=
\sum_{r=1}^\infty \mathbb P(N(s)\ge r).
\]
}
\[
\mathbb E[N(s)]
=
\sum_{r=1}^{\infty}\mathbb P(N(s)\ge r).
\]

For the upper bound on the mean, split the sum at $i_*$ and use \textnormal{(i)}:
\[
\mathbb E[N(s)]
=
\sum_{r=1}^{i_*}\mathbb P(N(s)\ge r)
+
\sum_{r=i_*+1}^{\infty}\mathbb P(N(s)\ge r)
\le
i_*+\sum_{k=1}^{\infty}\mathbb P(N(s)\ge i_*+k).
\]
Therefore
\[
\mathbb E[N(s)]
\le
i_*+C_1(\rho)\sum_{k=1}^{\infty}\rho^{k/2}
=
i_*+\frac{C_1(\rho)\sqrt{\rho}}{1-\sqrt{\rho}}
\le
i_*+\frac{C_1(\rho)}{1-\sqrt{\rho}}.
\]

For the lower bound, we again split at $i_*$:
\[
\mathbb E[N(s)]
=
\sum_{r=1}^{i_*}\mathbb P(N(s)\ge r)
+
\sum_{r=i_*+1}^{\infty}\mathbb P(N(s)\ge r).
\]
The second term is nonnegative, so
\[
\mathbb E[N(s)]
\ge
\sum_{r=1}^{i_*}\mathbb P(N(s)\ge r)
=
\sum_{r=1}^{i_*}\bigl(1-\mathbb P(N(s)<r)\bigr)
=
i_*-\sum_{r=1}^{i_*}\mathbb P(N(s)\le r-1).
\]
Reindexing with $k:=i_*-r+1$ gives
\[
\sum_{r=1}^{i_*}\mathbb P(N(s)\le r-1)
=
\sum_{k=1}^{i_*}\mathbb P(N(s)\le i_*-k).
\]
Hence
\[
\mathbb E[N(s)]
\ge
i_*-\sum_{k=1}^{i_*}\mathbb P(N(s)\le i_*-k).
\]
Applying \textnormal{(ii)} and then extending the finite sum to an infinite one, we obtain
\[
\mathbb E[N(s)]
\ge
i_*-\sum_{k=1}^{\infty} C_2(\rho)\exp\bigl(-\tfrac12\rho^{-k+1}\bigr).
\]
Using $e^{-x}\le 1/x$ for $x>0$, we get
\[
\exp\bigl(-\tfrac12\rho^{-k+1}\bigr)\le 2\rho^{k-1},
\]
and therefore
\[
\mathbb E[N(s)]
\ge
i_*-\frac{2C_2(\rho)}{1-\rho}.
\]

Since $\bigl|i_*(s)-\log s/\log(1/\rho)\bigr|\le 1$, combining the two bounds gives
\[
\left|
\mathbb E[N(s)]-\frac{\log s}{\log(1/\rho)}
\right|
\le
1+\frac{C_1(\rho)}{1-\sqrt{\rho}}+\frac{2C_2(\rho)}{1-\rho}
=
A_\rho.
\]

To bound the variance, let $Y_s:=|N(s)-i_*(s)|$. By \textnormal{(i)} and \textnormal{(ii)}, for every
$m\ge1$,
\[
\mathbb P(Y_s\ge m)
\le
C_1(\rho)\rho^{m/2}+C_2(\rho)\exp\bigl(-\tfrac12\rho^{-m+1}\bigr).
\]
Since $Y_s$ is nonnegative and integer-valued, we have the pointwise identity
\[
Y_s^2=\sum_{m=1}^{\infty}(2m-1)\mathbf 1\{Y_s\ge m\}.
\]
Applying Tonelli's theorem again gives
\[
\mathbb E[Y_s^2]
=
\sum_{m=1}^{\infty}(2m-1)\,\mathbb P(Y_s\ge m).
\]
Hence
\[
\mathbb E[Y_s^2]
\le
C_1(\rho)\sum_{m=1}^{\infty}(2m-1)\rho^{m/2}
+
C_2(\rho)\sum_{m=1}^{\infty}(2m-1)\exp\bigl(-\tfrac12\rho^{-m+1}\bigr).
\]
Using again $e^{-x}\le 1/x$, the second sum is bounded by
\[
2C_2(\rho)\sum_{m=1}^{\infty}(2m-1)\rho^{m-1}.
\]

Note that for $|x|<1$, one has
\[
\sum_{m=1}^\infty (2m-1)x^m
=
2\sum_{m=1}^\infty m x^m-\sum_{m=1}^\infty x^m
=
\frac{2x}{(1-x)^2}-\frac{x}{1-x}
=
\frac{x(1+x)}{(1-x)^2}.
\]
Applying this with $x=\sqrt{\rho}$ and $x=\rho$ gives, respectively, 
\[
\sum_{m=1}^{\infty}(2m-1)\rho^{m/2}
=
\frac{\sqrt{\rho}(1+\sqrt{\rho})}{(1-\sqrt{\rho})^2},
\qquad
\sum_{m=1}^{\infty}(2m-1)\rho^m
=
\frac{\rho(1+\rho)}{(1-\rho)^2}.
\]

Therefore
\[
\mathbb E[Y_s^2]
\le
C_1(\rho)\frac{\sqrt{\rho}(1+\sqrt{\rho})}{(1-\sqrt{\rho})^2}
+
2C_2(\rho)\frac{1+\rho}{(1-\rho)^2}.
\]

Finally, since
\[
\operatorname{Var}(N(s))
=
\mathbb E\bigl[(N(s)-\mathbb E[N(s)])^2\bigr]
\le
2\,\mathbb E\bigl[(N(s)-i_*(s))^2\bigr]
+
2\,(\mathbb E[N(s)]-i_*(s))^2,
\]
the first term is bounded by twice the previous display, and the second is at most
\[
2\left(
\frac{C_1(\rho)}{1-\sqrt{\rho}}+\frac{2C_2(\rho)}{1-\rho}
\right)^2.
\]
Combining these bounds yields $\operatorname{Var}(N(s))\le B_\rho$.
\end{proof}

\section{Proof of Lemma~\ref{lem:K_concentration}.}
\label{sec:proof-k-concentration}

\begin{proof}
Since $s_n=\log n+c$, we have $s_n\ge1$ for all sufficiently large $n$. For each fixed $n$, the
random variables $N_1(s_n),\dots,N_n(s_n)$ are i.i.d., so
$\mathbb E[K(s_n)]=n\,\mathbb E[N(s_n)]$ and
$\operatorname{Var}(K(s_n))=n\,\operatorname{Var}(N(s_n))$.

By Lemma~\ref{lem:one_leaf_birth}, for every such $n$,
\[
\left|
\mathbb E[N(s_n)]-\frac{\log s_n}{\log(1/\rho)}
\right|
\le A_\rho
\qquad\text{and}\qquad
\operatorname{Var}(N(s_n))\le B_\rho.
\]
Multiplying the first bound by $n$ and dividing by $n\log\log n$ yields
\[
\left|
\frac{\mathbb E[K(s_n)]}{n\log\log n}
-
c_\rho\,\frac{\log s_n}{\log\log n}
\right|
\le
\frac{A_\rho}{\log\log n}.
\]

We now compare $\log s_n$ with $\log\log n$ explicitly. Since $s_n=\log n+c$, we can write
\[
\log s_n=\log(\log n+c)=\log\log n+\log\left(1+\frac{c}{\log n}\right).
\]
Choose $n$ large enough that $\log n\ge 2|c|$, so that $\left|c/\log n\right|\le 1/2$. Then the
elementary bound $|\log(1+u)|\le 2|u|$ for $|u|\le 1/2$ gives
\[
\left|\log s_n-\log\log n\right|
=
\left|\log\left(1+\frac{c}{\log n}\right)\right|
\le
\frac{2|c|}{\log n}.
\]
Dividing by $\log\log n$, we obtain
\[
\left|
\frac{\log s_n}{\log\log n}-1
\right|
\le
\frac{2|c|}{\log n\,\log\log n}.
\]

Fix $\varepsilon>0$ and $\delta>0$. Since $\log\log n\to\infty$ and
$\log n\,\log\log n\to\infty$, there exists $n_1$ such that for all $n\ge n_1$,
\[
\frac{A_\rho}{\log\log n}\le \frac{\varepsilon}{4}
\qquad\text{and}\qquad
c_\rho\,\frac{2|c|}{\log n\,\log\log n}\le \frac{\varepsilon}{4}.
\]
For such $n$, the two previous displays imply
\[
\left|
\frac{\mathbb E[K(s_n)]}{n\log\log n}-c_\rho
\right|
\le
\left|
\frac{\mathbb E[K(s_n)]}{n\log\log n}
-
c_\rho\,\frac{\log s_n}{\log\log n}
\right|
+
c_\rho\left|
\frac{\log s_n}{\log\log n}-1
\right|
\le
\frac{\varepsilon}{2}.
\]

Similarly,
\[
\operatorname{Var}\left(\frac{K(s_n)}{n\log\log n}\right)
=
\frac{\operatorname{Var}(K(s_n))}{n^2(\log\log n)^2}
=
\frac{\operatorname{Var}(N(s_n))}{n(\log\log n)^2}
\le
\frac{B_\rho}{n(\log\log n)^2}.
\]
Since $n(\log\log n)^2\to\infty$, there exists $n_2$ such that for all $n\ge n_2$,
\[
\frac{4B_\rho}{\varepsilon^2\,n(\log\log n)^2}\le \delta.
\]

Now let $X_n:=K(s_n)/(n\log\log n)$. For $n\ge \max\{n_1,n_2\}$, Chebyshev's inequality gives
\[
\mathbb P\left(
\left|
X_n-\mathbb E[X_n]
\right|>\frac{\varepsilon}{2}
\right)
\le
\frac{4\,\operatorname{Var}(X_n)}{\varepsilon^2}
\le
\frac{4B_\rho}{\varepsilon^2\,n(\log\log n)^2}
\le \delta.
\]
On the event $\{|X_n-\mathbb E[X_n]|\le \varepsilon/2\}$, the bound
$|\mathbb E[X_n]-c_\rho|\le \varepsilon/2$ implies $|X_n-c_\rho|\le \varepsilon$. Therefore
\[
\mathbb P\left(
\left|
\frac{K(s_n)}{n\log\log n}-c_\rho
\right|
\le \varepsilon
\right)\ge 1-\delta
\]
for all $n\ge \max\{n_1,n_2\}$. This proves the lemma.
\end{proof}


\begin{thebibliography}{99}

\bibitem{DoshiHuEun2023}
Doshi, Vishwaraj, Jie Hu, and Do Young Eun. 2023.
``Self-Repellent Random Walks on General Graphs: Achieving Minimal Sampling Variance via Nonlinear Markov Chains.''
In \emph{Proceedings of the 40th International Conference on Machine Learning (ICML 2023)},
PMLR 202.

\bibitem{HuMaEun2025}
Hu, Jie, Yi-Ting Ma, and Do Young Eun. 2025.
``Beyond Self-Repellent Kernels: History-Driven Target Towards Efficient Nonlinear MCMC on General Graphs.''
arXiv preprint arXiv:2505.18300.

\bibitem{AmitParisiPeliti1983}
Amit, Daniel J., Giorgio Parisi, and Luca Peliti. 1983.
``Asymptotic Behavior of the `True' Self-Avoiding Walk.''
\emph{Physical Review B} 27: 1635--1645.

\bibitem{Toth1995}
T\'oth, B\'alint. 1995.
``The `True' Self-Avoiding Walk with Bond Repulsion on $\mathbb{Z}$: Limit Theorems.''
\emph{The Annals of Probability} 23 (4): 1523--1556.

\bibitem{VetoToth2008}
Vet\H{o}, B\'alint, and B\'alint T\'oth. 2008.
``Self-Repelling Random Walk with Directed Edges on $\mathbb{Z}$.'' 
\emph{Electronic Journal of Probability} 13: 1909--1926.

\bibitem{Grassberger2017}
Grassberger, Peter. 2017.
``Self-Trapping Self-Repelling Random Walks.''
\emph{Physical Review Letters} 119: 140601.

\bibitem{DiaconisHolmesNeal2000}
Diaconis, Persi, Susan Holmes, and Radford M. Neal. 2000.
``Analysis of a Nonreversible Markov Chain Sampler.''
\emph{The Annals of Applied Probability} 10: 726--752.

\bibitem{Neal2004}
Neal, Radford M. 2004.
``Improving Asymptotic Variance of MCMC Estimators: Non-Reversible Chains Are Better.''
arXiv preprint math/0407281.

\bibitem{AlonBenjaminiLubetzkySodin2007}
Alon, Noga, Itai Benjamini, Eyal Lubetzky, and Sasha Sodin. 2007.
``Non-Backtracking Random Walks Mix Faster.''
\emph{Communications in Contemporary Mathematics} 9 (4): 585--603.

\bibitem{LeeXuEun2012}
Lee, Chul-Ho, Xin Xu, and Do Young Eun. 2012.
``Beyond Random Walk and Metropolis--Hastings Samplers: Why You Should Not Backtrack for Unbiased Graph Sampling.''
arXiv preprint arXiv:1204.4140.

\bibitem{TuritsynChertkovVucelja2011}
Turitsyn, Konstantin S., Michael Chertkov, and Marija Vucelja. 2011.
``Irreversible Monte Carlo Algorithms for Efficient Sampling.''
\emph{Physica D: Nonlinear Phenomena} 240 (4--5): 410--414.

\bibitem{ChenHwang2013}
Chen, Guan-Yu, and Chii-Ruey Hwang. 2013.
``Accelerating Reversible Markov Chains.''
\emph{Statistics \& Probability Letters} 83 (9): 1956--1962.

\bibitem{MaChenWuFox2016}
Ma, Yi-An, Tianqi Chen, Lei Wu, and Emily B. Fox. 2016.
``A Unifying Framework for Devising Efficient and Irreversible MCMC Samplers.''
arXiv preprint arXiv:1608.05973.

\bibitem{ThinEtAl2020}
Thin, Achille, Nikita Kotelevskii, Alain Durmus, Eric Moulines, Arnaud Doucet, and Pierre Jacob. 2020.
``Nonreversible MCMC from Conditional Invertible Transforms: A Complete Recipe with Convergence Guarantees.''
arXiv preprint arXiv:2012.15550.

\bibitem{DelMoralMiclo2004}
Del Moral, Pierre, and Laurent Miclo. 2004.
``On Convergence of Chains with Occupational Self-Interactions.''
\emph{Proceedings of the Royal Society A} 460 (2041): 325--346.

\bibitem{DelMoralMiclo2006}
Del Moral, Pierre, and Laurent Miclo. 2006.
``Self-Interacting Markov Chains.''
\emph{Stochastic Analysis and Applications} 24: 615--660.

\bibitem{AndrieuJasraDoucetDelMoral2007}
Andrieu, Christophe, Ajay Jasra, Arnaud Doucet, and Pierre Del Moral. 2007.
``Nonlinear Markov Chain Monte Carlo.''
\emph{ESAIM: Proceedings} 19: 79--84.

\bibitem{AndrieuJasraDoucetDelMoral2011}
Andrieu, Christophe, Ajay Jasra, Arnaud Doucet, and Pierre Del Moral. 2011.
``On Nonlinear Markov Chain Monte Carlo.''
\emph{Bernoulli} 17 (3): 987--1014.

\bibitem{DelMoralDoucet2010}
Del Moral, Pierre, and Arnaud Doucet. 2010.
``Interacting Markov Chain Monte Carlo Methods for Solving Nonlinear Measure-Valued Equations.''
\emph{The Annals of Applied Probability} 20 (2): 593--639.

\bibitem{FortMoulinesPriouret2011}
Fort, Gersende, Eric Moulines, and Philippe Priouret. 2011.
``Convergence of Adaptive and Interacting Markov Chain Monte Carlo Algorithms.''
\emph{The Annals of Statistics} 39 (6): 3262--3289.

\bibitem{Fort2015}
Fort, Gersende. 2015.
``Central Limit Theorems for Stochastic Approximation with Controlled Markov Chain Dynamics.''
\emph{ESAIM: Probability and Statistics} 19: 60--80.

\end{thebibliography}
\end{document}